\tikzset{every state/.style={inner sep=1pt,minimum size=14pt}}
\tikzset{edge/.style = {->, semithick, >=Latex}}
\tikzset{edgelabel/.style = {inner sep=1pt, fill=white}}
\tikzset{init/.style = {initial, initial text={}}}
\tikzset{>=Latex}
\tikzset{ball/.style={circle,draw,fill=black,inner sep=0pt, minimum width=4pt}}
\theoremstyle{plain}
\newtheorem{thm}{Theorem}[section]
\theoremstyle{definition}
\newtheorem{defn}[thm]{Definition}
\newtheorem{lem}[thm]{Lemma}
\newtheorem{corol}[thm]{Corollary}
\newtheorem{rmk}{Remark}
\newtheorem{prop}[thm]{Proposition}
\newcommand{\real}{\mathbb{R}}
\newcommand{\pth}[3][]{#2 \stackrel{#1}{\leadsto} #3}
\newcommand{\arc}[3][]{#2 \xrightarrow{#1} #3}
\newcommand{\overallalg}{\texttt{A-C\_decomposition}}
\DeclareMathOperator{\nestwidth}{nw}
\newcommand{\elis}[1]{\normalsize{\color{red}(Elis:\ #1)}}
\let\latex@fnsymbol\@fnsymbol
\renewcommand\@fnsymbol[1]{\ifcase#1\or*\or1\or2\else\@ctrerr\fi}
\newcommand{\restorefnsymbol}{\let\@fnsymbol\latex@fnsymbol}
\title{Faster shortest-path algorithms \\ using the acyclic-connected tree}
\author{Elis Stefansson\thanks{These authors contributed equally.} \thanks{School of Electrical Engineering and Computer Science, KTH Royal Institute of Technology, Sweden (e-mail: \texttt{\{elisst, kallej\}@kth.se}). The authors are also affiliated with Digital Futures. \newline
This work was supported in part by Swedish Research Council Distinguished Professor Grant 2017-01078, Knut and Alice Wallenberg Foundation Wallenberg Scholar Grant, and Swedish Strategic Research Foundation FUSS SUCCESS Grant.
} 
\and
Oliver Biggar\footnotemark[1] \thanks{Department of Computer Science, Columbia University, United States (e-mail: \texttt{oliver.biggar@columbia.edu}). This work was partially completed while the author was affiliated with CIICADA Lab, School of Engineering, Australian Nation University, Australia.
}
\and 
Karl H. Johansson\footnotemark[2]
}
\date{}
\begin{document}

\maketitle

\begin{abstract}
We provide a method to obtain beyond-worst-case time complexity for any single-source-shortest-path (SSSP) algorithm by exploiting modular structures in graphs. The key novelty is a graph decomposition, called the acyclic-connected (A-C) tree, which breaks up a graph into a recursively nested sequence of strongly connected components in topological order. The A-C tree is optimal in the sense that it maximally decomposes the graph, formalised by a parameter called \emph{nesting width}, measuring the extent to which a graph can be decomposed. We show how to compute the A-C tree in linear time, allowing it to be used as a preprocessing step for SSSP. Indeed, we transform any SSSP algorithm by first computing the A-C tree, and then running the SSSP algorithm in a careful recursive manner on the A-C tree. We illustrate this with two state-of-the-art algorithms: Dijkstra's algorithm and the recent sparse graph algorithm of Duan et al., obtaining improved time complexities of $O(m+n\log(\nestwidth(G)))$ and $O(m\alpha(n)+m\log^{2/3}(\nestwidth(G)))$, respectively, where $\nestwidth(G) \leq n$ is the nesting width of the graph $G$, and $\alpha(n)$ is the extremely slow-growing inverse Ackermann function. Some classes of graphs, such as directed acyclic graphs, have bounded nesting width, and we obtain linear-time SSSP algorithms for these graphs.

\end{abstract}

\section{Introduction}

\subsection{Motivation} \label{sec: motivation}

Finding the shortest paths from a node in a weighted graph is one of the oldest and most fundamental problems in the field of algorithms, whose modern applications range from robotics~\cite{lavalle2006planning} to route planning~\cite{bast2016route}. This problem, called the \emph{Single-Source-Shortest-Path problem} (SSSP), originates from the work of Dijkstra~\cite{Dijkstra1959}, whose eponymous algorithm efficiently computes the tree of shortest paths from a given fixed source node $s$ to all other nodes in a graph $G$. Dijkstra's original algorithm \cite{Dijkstra1959} runs in $O(n^2)$ steps in an $n$-node graph; the addition of the \emph{Fibonacci heap} data structure by Fredman and Tarjan \cite{DijkstraFibonacci} improved the complexity of Dijkstra's algorithm to $O(m+n\log n)$, where $n$ and $m$ are the number of nodes and arcs in the graph, respectively. This matches a known lower bound for Dijkstra's algorithm: because the nodes are visited in sorted order of distance from the source, the algorithm inherits the $\Omega(n\log n)$ lower bound for comparison sorting.





For many decades, it appeared that the SSSP problem was essentially solved (at least in the comparison-addition model \cite{duan2025breaking}). However, two recent breakthroughs (\cite{haeupler2024universal,duan2025breaking}) have called this into question, and the SSSP is now undergoing a significant revival in theoretical computer science. In particular, as noted in \cite{duan2025breaking}, the comparison sorting worst-case lower bound is actually too pessimistic. \textbf{Instead, SSSP can sometimes---and maybe \emph{always}---be solved more efficiently than Dijkstra's algorithm}.

Indeed, to solve SSSP, one only needs to return the shortest-path tree. Dijkstra's algorithm also returns the nodes in sorted order, a \emph{strictly harder} problem, now called \emph{Distance Ordering} (DO). The lower bound for DO, derived from the problem of comparison sorting, does not apply to SSSP. This motivates the work of Duan et al. \cite{duan2025breaking}, who provide the first algorithm to break the $\Omega(n\log n)$ `sorting barrier' for SSSP. Their algorithm returns the shortest-path tree in $O(m\log^{2/3} n)$ steps, which is strictly smaller than $O(m + n \log n)$ for sparse graphs ($m = \Theta(n)$). This result is the first indication that the worst-case complexity of SSSP may be significantly smaller than that of DO. 

In another recent work, Haeupler et al.~\cite{haeupler2024universal} analysed the \emph{beyond-worst-case} performance of Dijkstra's algorithm and found, counterintuitively, that it is not ``universally optimal" (even for the DO problem). Universal optimality measures the worst-case performance of an algorithm on a given input graph, for all possible weightings of the arcs.\footnote{They then show, equipped with a new data structure called a ``working-set heap", that Dijkstra's algorithm can be made universally optimal for DO.} 
These two breakthroughs demonstrate that the complexity of SSSP remains open, and, crucially, that new ideas can lead to significant improvement on this fundamental problem.

Our work extends this line of work, beginning with the following observation: despite their strengths, neither of these new algorithms \cite{haeupler2024universal, duan2025breaking} are \emph{universally optimal for SSSP}. In other words, they do not achieve the worst-case time complexity for SSSP given the graph structure. This is demonstrated by \emph{directed acyclic graphs} (DAGs), where we know that SSSP can be solved in \emph{linear time} \cite{tarjan1972depth}, but both of these state-of-the-art methods take superlinear time (this is clear for \cite{haeupler2024universal} since it is  lower bounded by $\Omega(n\log n)$, and for \cite{duan2025breaking} see Appendix \ref{sec: superlinear}).
\begin{figure}[h]
\centering
\begin{subfigure}{.45\textwidth}
    \centering
    \includestandalone[width=0.8\linewidth]{figs/intro_ex3}
    \caption{}
    \label{fig: intro module in example}
\end{subfigure}
\quad
\begin{subfigure}{.45\textwidth}
    \centering
    \includestandalone[width=0.8\linewidth]{figs/intro_ex3_nested}
    \caption{}
    \label{fig:intro nested}
\end{subfigure}
\caption{A graph $G$, and a subset of its nodes (Fig.~\ref{fig: intro module in example}). This set allows $G$ to be decomposed into a pair of graphs $G_1$ and $G_2$, with $G_2$ nested at a node $X$ of $G_1$ (Fig.~\ref{fig:intro nested}).}
\label{fig: intro example}
\end{figure}

\subsection{Modular structure in graphs} 
Inspired by \cite{haeupler2024universal}, we would like to obtain SSSP algorithms that match the worst-case time complexity of state-of-the-art methods (such as \cite{duan2025breaking}), but outperform these methods when the graph possesses additional structure, with the goal of building towards a universally optimaln algorithm for SSSP.\footnote{Our method is not universally optimal for SSSP, but it is a necessary step in that direction, because any universally optimal SSSP algorithm must perform at least as well as ours on graphs with bounded nesting width. See Section~\ref{sec: conclusions}.} Our focus in this paper is on \emph{modular structure}, that is, \emph{when a graph can be hierarchically decomposed into a nested collection of smaller graphs}. Hierarchical nesting is motivated by the presence of modularity in many real-world applications. It can also be viewed as a generalisation of the DAG case; as we show later, every DAG can be decomposed as a nested collection of two-node graphs. Figure~\ref{fig: intro example} demonstrates such a modular structure in a graph $G$. Here, the set of nodes $S = \{a,d,e,f,g,h\}$ in $G$ has an important property: \textit{all arcs into $S$ go to $h$}. This allows $G$ to be represted as a pair of graphs $G_1$ and $G_2$ (Fig.~\ref{fig:intro nested}), with $G_2$ nested in a node $X$ of $G_1$. Thus, instead of running a black-box SSSP algorithm on $G$, we can instead run it on $G_1$ and then recursively on $G_2$ upon reaching $X$, beginning at its source $h$, with the nonlinear component of the time complexity scaling with the size of $G_1$ and $G_2$. Using Dijkstra's algorithm as an example, decomposing this graph improves the running time from $O(m_1+m_2 + (n_1+n_2)\log(n_1+n_2))$ to $O(m_1 + n_1\log n_1 + m_2 + n_2\log n_2) \leq O(m_1 + m_2 + (n_1+n_2)\log \max(n_1,n_2))$, where $n_i$ and $m_i$ are the number of nodes and arcs in $G_i$, respectively.

\subsection{Contribution}


In this paper, we provide a method to obtain beyond-worst-case time complexity for any SSSP algorithm by exploiting modular structures in graphs. The key idea is to first decompose a given graph into a hierarchy of smaller subgraphs and then run the SSSP algorithm on this decomposition. We first formalise the concept of hierarchical decomposition, by defining a \emph{nesting decomposition} of a graph. In general, a graph typically has several nesting decompositions. We therefore seek one where its \emph{width}, defined as the maximum size of its nested subgraphs, is minimised.\footnote{Intuitively, this corresponds to a graph that is maximally decomposed.} This minimal width, called the \emph{nesting width} (inspired by \emph{treewidth} \cite{robertson1984graph}), is a novel graph parameter that captures the extent to which a graph can be decomposed. We then introduce a novel combinatorial object called the \emph{acyclic-connected} (A-C) tree of a graph, which defines how a graph can be broken into a recursively nested sequence of strongly connected components in topological order. The A-C tree is optimal in the sense that it defines a minimal-width nesting decomposition. Crucially, we give a linear-time algorithm for computing the A-C tree, combining ideas from the linear-time algorithms for strongly connected components \cite{tarjan1972depth} and dominator trees~\cite{alstrup1999dominators}. Finally, by computing the A-C tree as a preprocessing step, we show how we can improve the time complexity of any SSSP algorithm by running it on the A-C tree. In particular, we improve two state-of-the-art algorithms: Dijkstra's algorithm \cite{DijkstraFibonacci} with $O(m+n\log(\nestwidth(G)))$ steps, and the recent sparse graph algorithm \cite{duan2025breaking} with $O(m \alpha(n)+m\log^{2/3}(\nestwidth(G)))$ steps, where $\nestwidth(G) \leq n$ is the nesting width of $G$, and $\alpha(n)$ is the extremely slow-growing inverse Ackermann function \cite{tarjan1984worst}. Many classes of graphs, such as DAGs, have \emph{bounded} nesting width, and so on these graphs we obtain SSSP algorithms that run in linear time, a significant step towards beyond-worst-case optimal SSSP algorithms.

\subsection{Related work} \label{sec: related}
Constructing fast shortest path algorithms has been an active area of research since the `50s, with pioneering work given by \cite{moore1959shortest,shimbel1954structure,bellman1958routing,For56,Dijkstra1959}, culminating in the work of Fredman and Tarjan \cite{DijkstraFibonacci}. Later work focused on strengthening the assumptions on the inputs. One such approach is to constrain the weights, for example by assuming them to be integers \cite{dial1969algorithm,thorup2003integer,bernstein2022negative,bringmann2023negative} or uniformly bounded \cite{dial1969algorithm,ahuja1990faster,johnson1981priority,karlsson1983mlog}. We differ from this line of work since our approach is entirely structural---we assume no restriction on the weights except being non-negative. Other work using structural assumptions include search algorithms for planar graphs \cite{lipton1979generalized,henzinger1997faster,fakcharoenphol2006planar,klein2010shortest,mozes2010shortest}, minor-free graphs \cite{wulff2011separator}, directed acyclic graphs (DAGs) \cite{tarjan1972depth}, and most recently \emph{universal optimality}~\cite{haeupler2024universal}. Generalising the DAG case, \cite{takaoka1998shortest} showed that SSSP can be solved in time $O(m + n\log k)$ where $k$ is the size of the largest connected component. We obtain this as a special case of our method, because 
the size of the largest connected component $k$ is always at least $\nestwidth(G) - 1$, and is often strictly larger. Finally, there are several recent algorithms using continuous optimisation tools \cite{saranurak2019expander,nanongkai2017dynamic,chuzhoy2020deterministic,bernstein2020fully,nanongkai2017dynamic,wulff2017fully}. We work instead directly in a discrete setup to be closer to the original (discrete) problem.


%
%

This paper introduces the A-C tree as a hierarchical decomposition of a graph. Similar concepts include the low diameter decomposition (see \cite{bernstein2022negative,bringmann2023negative} and references within), which groups together strongly connected sets with respect to the weights (essentially, nodes that are close together should be grouped together), with probabilistic guarantees. In particular, \cite{bernstein2022negative,bringmann2023negative} also hierarchically extract acyclic parts using topological order between strongly connected sets to compute shortest path trees faster, but only for integral weights (possibly negative integers). 
Another line of research is the `hierarchy-based' approach developed by \cite{pettie2004new,thorup1999undirected,pettie2005shortest,hagerup2000improved}. This approach divides the graph into subgraphs splitting up the original SSSP into smaller SSSPs to search faster. The basic idea is to find subgraphs that are safe to search in greedily (ignoring the remaining graph) up to a certain distance. This idea is similar to how the A-C tree divides the SSSP into smaller problems. However, these hierarchy-type approaches ultimately rely on the weights, whereas our decomposition is entirely structural. Further, it is shown in \cite{pettie2004new} that any algorithm using this hierarchy method must run in $\Omega(m + n\log n)$ time in the worst case on a certain family of directed graphs. It is straightforward to show that this family has bounded nesting width, and so our algorithm runs in linear time on these graphs.

The A-C tree proposed in this work uses the notion of a \emph{dominator tree}, introduced by \cite{prosser1959applications} and with the first algorithm given by \cite{lowry1969object}. Given a graph with a fixed source node $s$, a node $m$ is said to \emph{dominate} $v$ if every path (from $s$) to $v$ goes through $m$ (Definition~\ref{def: inequality}), and the dominator tree is the tree representing this partial order over all nodes (with root $s$). Efficient algorithms for computing the dominator tree include the classical near-linear time algorithm given by \cite{lengauer1979fast}, and more sophisticated linear time algorithms \cite{alstrup1999dominators,buchsbaum2008linear,buchsbaum1998new,georgiadis2004finding}. In our paper, the dominator tree essentially defines the `acyclic' part of the A-C tree. In this way, the A-C tree can be seen as an extension of the dominator tree, where we also recursively extract strongly connected components. Dominator trees have been used in several applications, such as the design of efficient compilers \cite{aho1977principles,cytron1991efficiently}, modelling ecosystems \cite{allesina2004dominates}, diagnostic test generation \cite{amyeen2001fault}, 2-vertex connectivity tests and approximations \cite{georgiadis2010testing, georgiadis2011approximating}, influencing diffusion networks \cite{gomez2016influence}, finding strong bridges in graphs \cite{italiano2012finding}, diagnosing memory leaks \cite{maxwell2010diagnosing}, and computing reachability in graphs subject to constraints \cite{quesada2006using}. However, dominator trees have rarely been applied to shortest path problems, as we do in this paper. One exception is the recent work of \cite{haeupler2024universal} which we discussed in the introduction. There, they used the dominator tree in a subroutine to simplify the graph as part of the construction of a universally optimal algorithm for distance ordering in terms of the number of comparison queries. Our usage is similar in flavour, though the dominator tree plays a much more central role in our definitions and results.

Our hierarchical approach using modules is inspired by the recent work of \cite{biggar2022modularity,biggar2021modular}. In \cite{biggar2021modular}, the authors study the structure of finite state machines (FSMs) by developing an algorithm to decompose them into \emph{hierarchical finite state machines}. They achieve this by identifying `modules' in the FSM and using this to construct an equivalent nested hierarchy of machines. However, their work did not consider planning on FSMs. This was extended by 
\cite{stefansson2023ecc,stefansson2023cdc}, who showed how to solve SSSP on weighted hierarchical FSMs. Similar to our paper, they exploit the modularity of the hierarchy to apply Dijkstra's algorithm recursively. However, \cite{stefansson2023ecc,stefansson2023cdc} assumes a given hierarchical decomposition of machines, while in this paper we first efficiently construct such a decomposition.

\subsection{Outline}

The remainder of the paper is as follows. Section~\ref{sec:prelims} presents the preliminaries. Section~\ref{sec: AC tree} describes the A-C tree. Section~\ref{sec: correctness} proves that the A-C tree is a minimal-width decomposition. Section~\ref{sec: nesting alg} presents an algorithm for computing the A-C tree and shows that it has linear-time complexity. This algorithm is then used in Section~\ref{sec: rec Dijkstra} to solve SSSP through a recursive application of a given SSSP algorithm on the A-C tree, starting with Dijkstra's algorithm in Section \ref{sec: recursive_dijkstra}, followed by an arbitrary SSSP algorithm in Section \ref{sec: rec SSSP}. An example illustrating the theory is given in Section \ref{sec: example}. Finally, Section~\ref{sec: conclusions} concludes the paper. The appendix presents additional results and proofs.

\section{Preliminaries} \label{sec:prelims}

\subsection{Graphs}

In this paper, a \emph{graph} refers to a weighted directed graph with a distinguished node called the \emph{source}. Specifically, a graph is a tuple $G=(V,E,s)$ where $V$ is a set of \emph{nodes}, $s\in V$ the source, and $E\subset V\times V$ a set of \emph{arcs}. Our graph decomposition uses only this data, but to find shortest paths we will also need a non-negative-valued \emph{weight function} $\ell:E\to\real_{\geq 0}$ which stores the weight for each arc, with graph denoted by $G = (V,E,s,\ell)$. 

We assume that all nodes in $G$ are reachable from $s$. This assumption is mild, because we could easily prune unreachable nodes prior to searching from $s$. Our goal is to solve the \emph{single-source shortest path problem}:

\begin{defn}[Single-source shortest path (SSSP)]
    Given a graph $G = (V,E,s,\ell)$, construct the shortest path tree from $s$ to all other nodes in $V$.
\end{defn}

We solve SSSP using a hierarchical approach. The idea is based on Bellman's principle~\cite{bellman1958routing}, which in this context means that if a node $c$ is on a shortest path $\pth{a}{b}$, then the subpaths $\pth{a}{c}$ and $\pth{c}{b}$ must also be shortest paths. Specifically, if we know that \emph{all} paths from $a$ to $b$ contain $c$, we can reduce the problem of computing the shortest path $\pth{a}{b}$ to the problem of separately finding shortest paths $\pth{a}{c}$ and $\pth{c}{b}$. We formalise this using the dominance relation.

\begin{defn}\cite{lengauer1979fast}\label{def: inequality}
    Let $G = (V,E,s)$ be a graph. If $a$ and $b$ are nodes, we say that \emph{$a$ dominates $b$} (written $a\geq_s b$) if every path $\pth{s}{b}$ contains $a$.
\end{defn}

In the literature on control-flow graphs, this dominance relation is used to compute the \emph{dominator tree} \cite{lowry1969object}, since the relation defines a tree order on the nodes with root~$s$. 

\begin{lem}[\cite{lengauer1979fast}] \label{tree order lemma} 
    The relation $\geq_s$ is a \emph{tree-structured partial order} on the nodes of $G$, rooted at $s$. That is, (1) $\geq_s$ is a partial order, (2) for all $a\in V$, $s\geq_s a$ and (3) if $a\geq_s c$ and $b\geq_s c$, then either $a\geq_s b$ or $b\geq_s a$.
\end{lem}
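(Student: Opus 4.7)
The plan is to verify the three conditions in turn, the last being the only nontrivial one. Throughout, I would freely use the observation that dominance is insensitive to whether one interprets ``path'' as walk or simple path, because any walk $s\leadsto v$ contains (as a vertex set) a simple sub-path $s\leadsto v$.

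For condition (1), I would first note reflexivity: any path ending at $a$ contains $a$. For transitivity, if $a\geq_s b$ and $b\geq_s c$, take an arbitrary path $P$ from $s$ to $c$; by $b\geq_s c$ it contains $b$, and the prefix from $s$ to the first occurrence of $b$ is itself a path $s\leadsto b$, which by $a\geq_s b$ contains $a$, so $P$ does too. For antisymmetry, suppose $a\geq_s b$, $b\geq_s a$, and $a\neq b$. Using that all nodes are reachable from $s$, pick a simple path $P$ from $s$ to $a$; by $b\geq_s a$ it contains $b$, necessarily before $a$ (as $a$ is the endpoint and $P$ is simple). The prefix of $P$ from $s$ to $b$ is then a simple path to $b$ that avoids $a$, contradicting $a\geq_s b$.

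Condition (2) is immediate: every path from $s$ begins at $s$, so contains it.

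For condition (3) — the tree property — I would argue as follows. Assume $a\geq_s c$ and $b\geq_s c$. Choose any simple path $P$ from $s$ to $c$; it must contain both $a$ and $b$, and by symmetry we may assume $a$ appears before $b$ along $P$. Write $P$ as a concatenation
\[
  P \;=\; P_{s\to a}\cdot P_{a\to b}\cdot P_{b\to c}.
\]
To conclude $a\geq_s b$, take any path $Q$ from $s$ to $b$ and form the walk $Q\cdot P_{b\to c}$, which goes from $s$ to $c$. By $a\geq_s c$ this walk must visit $a$; but $P$ is simple and $a$ lies strictly before $b$ on $P$, so $a$ does not occur in $P_{b\to c}$. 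Therefore $a$ must appear in $Q$, giving $a\geq_s b$.

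The main obstacle is exactly the argument for (3): one has to produce a path witnessing dominance between $a$ and $b$ given only that both dominate $c$. The key trick — splicing an arbitrary $s\leadsto b$ path with a fixed $b\leadsto c$ suffix of a simple $s\leadsto c$ path, and using simplicity to localise where $a$ can appear — is what makes the argument go through; the other parts are standard manipulations of the ``every path contains'' quantifier.
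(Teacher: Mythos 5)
Your proof is correct and follows essentially the same route as the paper's argument: prefix/suffix manipulations for reflexivity, transitivity and antisymmetry, and for the tree property the same key splice of an arbitrary $s\leadsto b$ path with the $b\leadsto c$ suffix of a fixed $s\leadsto c$ path on which $a$ precedes $b$. Your explicit remark that walks and simple paths give the same dominance relation is a nice bit of added care, but the substance of the argument is identical.
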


\begin{defn}[Dominator tree \cite{lengauer1979fast}]
    The dominator tree of a graph $G = (V,E,s)$ is the transitive reduction of the ordering $\geq_s$.
\end{defn}

The concept of dominance is closely related to---but not exactly the same as---the sets of nodes which can be treated as a recursively nested graph, as in Section~\ref{sec: motivation}. We formalise such sets through the notion of a \emph{module}.



\begin{defn}[Module] \label{def: module}
    A \emph{module} of $G = (V,E,s)$ is a set $M\subseteq V$ of nodes such that there exists an $m \in M$ where for every arc $\arc{u}{v}$, if $u\not\in M$ and $v\in M$ then $v=m$. In other words, all arcs into $M$ go to $m$. 
    We call $m$ the \emph{source} of $M$.\footnote{Note that `modules' also have a distinct interpretation in graph theory \cite{gallai1967transitiv,habib2010survey}. Here we use a form specific to single-source graphs, which is closer to the usage in \cite{biggar2021modular,biggar2022modularity}.}
\end{defn}
We also use the convention that if a set $M\subseteq V$ contains the source $s$ of $G$, then $M$ is a module \emph{if and only if} $M$ is a module with source $s$. Intuitively, one can think of an extra arc into $s$ from outside $G$, capturing that $s$ is the source of $G$.

Because we assume all nodes are reachable from $s$, every set that is not the whole node set $V$ must have at least one incoming arc. Also note that if $M$ is a module with source $m$, then $m\geq_s v$ for every $v\in M$.  Every graph has at least some modules: the whole set $V$ is always trivially a module, as are the singleton sets $\{v\}$ for each $v\in V$. We call these the \emph{trivial modules}. 

Modules have some nice algebraic properties. In particular, the following will be useful.

\begin{lem} \label{lem: intersection}
    If $M$ and $H$ are overlapping modules (they intersect but neither contains the other), then $M\cup H$ and $M\cap H$ are both modules.
\end{lem}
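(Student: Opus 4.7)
The plan is to exploit the tree-order structure of dominance (Lemma~\ref{tree order lemma}). Let $m$ and $h$ be the sources of $M$ and $H$ respectively, and pick some $p \in M \cap H$ (nonempty by overlap). Since every path from $s$ to $p$ must enter $M$ at $m$ and enter $H$ at $h$, we have $m \geq_s p$ and $h \geq_s p$, so by Lemma~\ref{tree order lemma} the nodes $m$ and $h$ are comparable; without loss of generality assume $m \geq_s h$. From here the proof reduces to two structural claims and a case analysis on external arcs.

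The main obstacle is the following key fact: $h \in M$. I would prove it by taking a \emph{simple} path $P$ from $s$ to $p$. Because $m \geq_s p$, $P$ passes through $m$, and because $P$ is simple it visits $m$ exactly once. Since $M$ is a module with source $m$, any re-entry into $M$ along $P$ would have to occur at $m$, which is impossible; therefore the suffix of $P$ starting at $m$ either lies entirely in $M$ or leaves $M$ permanently. Since the endpoint $p$ lies in $M$, the whole suffix lies in $M$. Now $m \geq_s h$ puts $h$ on $P$ before $p$, and $h \geq_s p$ with the $H$-module property puts $h$ on $P$ after $m$; hence $h$ belongs to this suffix, so $h \in M$. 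A symmetric argument shows that if $m \in H$ then $m = h$: a simple path to $m$ would have to both end at $m$ and contain $h$, while $m \geq_s h$ forces $m$ to precede $h$, so $m = h$.

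Given these two facts, verifying the lemma is routine case analysis. For $M \cup H$, take any external arc $u \to v$ with $u \notin M \cup H$ and $v \in M \cup H$: if $v \in M$, the module property of $M$ forces $v = m$; if $v \in H \setminus M$, the module property of $H$ would force $v = h$, but $h \in M$ contradicts $v \notin M$, so this case cannot occur. Hence $M \cup H$ is a module with source $m$. For $M \cap H$, take an external arc $u \to v$ with $v \in M \cap H$: if $u \notin H$, the $H$-module property gives $v = h$; if $u \in H$ then $u \notin M$, so the $M$-module property gives $v = m$, hence $m = v \in H$, and the second structural fact yields $m = h$, so again $v = h$. Thus $M \cap H$ is a module with source $h$. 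The convention about the global source $s$ is handled automatically, since $s \in M$ (resp.\ $s \in H$) forces $m = s$ (resp.\ $h = s$), matching the derived sources in both constructions.
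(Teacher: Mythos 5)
Your proof is correct and follows essentially the same route as the paper's: both first establish $h \in M \cap H$ by following a simple path through $m$ to a common node and using that any re-entry into $M$ must occur at $m$, then show that $m \in H$ forces $m = h$, and finish with the same case analysis on arcs entering $M \cup H$ and $M \cap H$. The only (minor) difference is that you obtain the collapse $m=h$ directly from antisymmetry of $\geq_s$ (Lemma~\ref{tree order lemma}), whereas the paper argues via an arc entering $M\cup H$ from outside and needs a footnote for the case $s\in M\cup H$; your variant is marginally cleaner.
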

\begin{proof}
    See Appendix \ref{sec: proof sec 2}.
\end{proof}

Finally, as discussed in the introduction, SSSP can be solved in linear time for DAGs, using the following simple method. First, one computes a topological ordering of the nodes and then visits the nodes in this order, updating distances to successor nodes as we visit each node \cite{tarjan1972depth}. This subroutine will play a key role in our algorithms since we work with nested acyclic graphs.

\subsection{Nesting width} \label{sec: nesting_width}

In Figure~\ref{fig: intro example}, we could treat the set $S$ as a nested graph $G_2$ with source node $h$ because all arcs into $S$ goes to $h$. In other words, we could represent $S$ as a nested graph precisely because it is a module. Note that if $S$ were not a module, say because an arc $\arc{b}{d}$ existed in Figure~\ref{fig: intro module in example}, then we would not be able to assign a unique source node to $G_2$. Hence the sets of modules define the \emph{nesting decompositions} of a graph.

\begin{defn}[Nesting decomposition] \label{def: nesting}
    Let $G = (V,E,s)$ be a graph. A \emph{nesting decomposition} of $G$ is a collection of non-overlapping modules which includes all trivial modules (the whole set $V$, and the singletons $\{v\}$ for each $v\in V$).
\end{defn}
Each module $M$ in a nesting decomposition $F$ has a unique \emph{maximal module partition}, which consists of the modules $H \in F$ such that $H\subset M$ and there is no module $K \in F$ with $H\subset K \subset M$. This is a necessarily a partition because we include the trivial singleton modules, and it is unique because all modules in the set are non-overlapping. The \emph{width} of a nesting decomposition is the largest size of the maximal module partition for any module in the collection.

\begin{defn}[Nesting width] \label{def: nesting width}
    The \emph{width} of a given nesting decomposition is the maximum size of a maximal module partition of any module in the decomposition. The \emph{nesting width} of a graph $G = (V,E,s)$, written $\nestwidth(G)$, is the minimum width over all nesting decompositions of $G$.
\end{defn}

A nesting decomposition $F$ of $G$ can be equivalently represented by a tree. Indeed, consider the graph whose nodes are the modules in the decomposition $F$, with an arc $\arc{M}{H}$ if $M\supset H$ and no $K \in F$ such that $M\supset K \supset H$. This graph defines a tree whose leaves are the singleton modules. In this form, the \emph{width} of the decomposition is the maximum number of children of any node in this tree.




Note that every graph has at least one trivial nesting decomposition, consisting of only the trivial modules $V$ and $\{v\}$ for each $v\in V$. As a tree, this is a single root node with all singletons as its children. In particular, we always have the upper bound $\nestwidth(G) \leq n$. Generally, though, graphs can have many distinct nesting decompositions. We want to find the `best' nesting decomposition, whose width is minimal and thus equal to the nesting~width. Such a decomposition is given in the next section. 

We finish this section by stating how the nesting width generalises DAGs given by the following proposition.

\begin{prop}[Nesting width and DAGs]\label{prop: generalise_dags}
For any graph,\footnote{For $G$ with more than one node. When $G$ has just one node, then $\nestwidth(G)=1$.} $2 \leq \nestwidth(G) \leq n$. In particular, for a DAG $G$, the lower bound $\nestwidth(G)=2$ is attained. In this sense, $\nestwidth(G)$ can be seen as a measure of how acyclic $G$ is.
\end{prop}

\begin{proof}
See Appendix \ref{sec: proof sec 2}.
\end{proof}

\begin{rmk}[Nesting width and sparsity]
Sparse graphs are more likely to have small nesting width, but sparse graphs and bounded nesting width graphs are distinct concepts. For details, see Appendix \ref{sec: nesting width vs sparsity}.
\end{rmk}

\section{The Acyclic-Connected Tree} \label{sec: AC tree}

Nesting decompositions are defined by modules, which are themselves related to the dominance relation $\geq_s$. This is the first ingredient of our decomposition: using the dominator tree to compute the relation $\geq_s$. However, dominance alone is not enough to construct a minimal-width nesting decomposition. Consider the following example: a graph with nodes $s, a_i, b_i$ for $i$ from 1 to $N$, where there are arcs from $s$ to $a_1$ and $b_1$, and arcs from $a_i$ and $b_i$ to both $a_{i+1}$ and $b_{i+1}$. The dominator tree of this graph is the trivial one-layer tree with root $s$. However, this is acyclic, so a nesting decomposition of width $2$ exists (follows from Proposition \ref{prop: generalise_dags}). More generally, while the dominator tree tells us that all modules have source node $s$, this is not sufficient to identify them.

In this example we can construct a minimal-width decomposition by topological sorting of the graph, given it is acyclic. Our main theorem proves that these two ideas---dominators and topological sorting---are \emph{always sufficient} to construct a minimal-width decomposition of any graph. Collectively, we will use these to define the \emph{acyclic-connected tree} (A-C tree) of the graph.

\begin{defn}[A-C tree] \label{def: AC tree}
    Let $G = (V,E,s)$ be a graph, and $T$ be its dominator tree. For any node $a$ in $T$, let $D(a)$ denote its \emph{descendants} and $C(a)$ its \emph{children}. For each $a$, we define a graph $G_a$ whose node set is $C(a)$, with an arc $\arc{u}{v}$ in $G_a$ if $u \neq v$ and there is an arc from $D(u)$ to $D(v)$ in $G[D(a)]$ (the induced subgraph of $G$ consisting of descendants of $a$). The \emph{acyclic-connected tree} (A-C tree) of $G$ is a map $a\mapsto S^a_1,\dots,S^a_{k_a}$ taking each node $a$ to the sequence of strongly connected components of $G_a$ in topological order.\footnote{There may be many topological orders, so this is not unique. However, the choice of ordering will not affect any analysis, so we will treat it as unique for simplicity.} We depict this as in Figure~\ref{fig:ac_decomposition}.
\end{defn}
\begin{figure}[h]
\centering
\begin{subfigure}{.4\textwidth}
    \centering
    \includestandalone[width=0.8\linewidth]{figs/intro_plain}
    
    \caption{}
    \label{fig:saved_and_sources}
\end{subfigure}
\quad
\quad
\begin{subfigure}{.4\textwidth}
    \centering
    \includestandalone[scale=.8]{figs/AC_tree_new}
    
    \caption{}
    \label{fig:AC_tree}
\end{subfigure}
\caption{
The graph $G$ from Figure~\ref{fig: intro example} and its associated A-C tree (Fig.~\ref{fig:AC_tree}), where the children of a node are topologically ordered from left to right. The child of each node $x$ in the tree is a set of nodes which represent a strongly connected component of $G_x$. We represent this by grouping the nodes in a rectangle (e.g., $e$ and $d$). We omit this rectangle for clarity if the strongly connected component is a singleton.}
\label{fig:ac_decomposition}
\end{figure} 


In the next section, we prove that the A-C tree defines a minimal-width nesting decomposition of the graph, and show how the A-C tree can be computed in linear time.

\section{A-C trees are minimal-width nesting decompositions} \label{sec: correctness}

In this section we explain why the A-C tree gives us an minimum-width nesting decomposition of a graph. Formally, we have the following result.

\begin{thm} \label{thm: correctness}
    Let $T_{AC}$ be the A-C tree of a graph $G = (V,E,s)$, so $T_{AC}$ maps each node $a$ to a sequence of sets $S^a_1,\dots,S^a_{k_a}$. Then the family of sets $AC:= \{\{a\}\cup\bigcup_{i=1}^{j} D(S^a_i) : a\in V, j\in \{1,\dots,k_a\}\}$ is a minimum-width nesting decomposition of $G$, where $D(S_i^a)$ is the set of descendants of the set $S_i^a$ in the A-C tree.
\end{thm}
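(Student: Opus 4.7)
The plan is to establish two things: $AC$ is a valid nesting decomposition of $G$, and its width equals $\nestwidth(G)$. For the first, I would verify each set $M_a^j := \{a\}\cup\bigcup_{i=1}^{j} D(S^a_i)$ is a module with source $a$. External arcs into $M_a^j$ arise either from $V \setminus D(a)$---which by the dominator property can only reach $D(a)$ at the node $a$ itself---or from $\bigcup_{i'>j} D(S^a_{i'})$. An arc of the latter kind would induce an arc in $G_a$ from a later SCC (index $>j$) to an earlier one (index $\leq j$), contradicting the topological ordering of SCCs. Non-overlap of the sets $M_a^j$ follows by case analysis on the positions of $a$ and $a'$ in the dominator tree (equal, incomparable, or comparable). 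The full set $V$ appears as $M_s^{k_s}$, and the singletons $\{v\}$ can be added (or interpreted as $M_a^0$) without affecting subsequent width calculations, since they have no proper sub-modules.

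A direct count then gives the width of $AC$: the maximal submodules of $M_a^j$ within $AC$ are $M_a^{j-1}$ (with the convention $M_a^0 := \{a\}$) together with the sets $D(c) = M_c^{k_c}$ for each $c\in S^a_j$, yielding $1 + |S^a_j|$ parts. Hence $AC$ has width $W := \max_{a,j}(1+|S^a_j|)$. For the matching lower bound, I fix $a^*, j^*$ attaining the maximum, write $S := S^{a^*}_{j^*} = \{c_1,\ldots,c_k\}$, and consider an arbitrary nesting decomposition $F$.

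The crux is the following key lemma: for any module $M$ of $G$ whose source $n$ dominates $a^*$, if $M\cap S \neq \emptyset$ then $M \supseteq S\cup\{a^*\}$. Its proof relies on two observations. \emph{Observation 1}: any arc $u\to v$ in $G$ with $u\in D(c')$ and $v\in D(c)$, for distinct children $c,c'$ of $a^*$, must satisfy $v=c$---otherwise a path $s\to u$ avoiding $c$ (which exists since $u\notin D(c)$) concatenated with $u\to v$ would reach $v$ without passing through $c$, contradicting $c$ dominating $v$. \emph{Observation 2}, a consequence of the module property, is that every $w\in M\setminus\{n\}$ has all its $G$-predecessors in $M$; iterating, $M$ contains every node reaching $w$ along a path that avoids $n$. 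Given $c_i\in M\cap S$ and target $c_{i'}\in S$, pick a walk $c_{i'}=c_{p_0}\to\cdots\to c_{p_\ell}=c_i$ in $G_a$; each arc $c_{p_{k-1}}\to c_{p_k}$ lifts by Observation 1 to an arc $u_k \to c_{p_k}$ in $G$ with $u_k\in D(c_{p_{k-1}})$, and reverse induction via Observation 2 forces first $u_k\in M$ and then, via a path $c_{p_{k-1}}\to u_k$ inside $D(c_{p_{k-1}})\subseteq D(a^*)$ that necessarily avoids $n$, also $c_{p_{k-1}}\in M$. Finally $a^*\in M$ because the $s$-to-$c_i$ path must traverse $n$ then $a^*$, and cannot exit and re-enter $M$ except at $n$.

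With the lemma in hand, let $N$ be the smallest module in $F$ containing two distinct elements of $S$ (which exists since $V\in F$). Its source dominates $a^*$, so by the lemma $N\supseteq S\cup\{a^*\}$. The minimality of $N$ ensures no proper sub-module in $F$ contains two elements of $S$, placing the $k$ nodes of $S$ in $k$ distinct parts of the maximal module partition of $N$. Any part $P$ sharing $a^*$ with some $c_i$ would be a module containing $c_i\in S$ whose source dominates $a^*$; by the lemma $S\subseteq P$, contradicting that $P$ is a proper sub-module containing at most one element of $S$. Hence $a^*$ occupies a $(k+1)$st distinct part, yielding width$(F)\geq W$. The principal obstacle will be the key lemma, and specifically the lift-and-trace step combining Observations 1 and 2 to propagate membership backwards around the SCC---this is where the dominator-tree structure and the module's predecessor-closure must be carefully interwoven, particularly when $n$ is a strict ancestor of $a^*$ so that intermediate paths must be chosen within $D(a^*)$ to avoid $n$.
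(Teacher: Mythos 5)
Your proposal is correct, and its first half (each $M_a^j$ is a module via the topological order of the SCCs, non-overlap via the dominator tree, and the count $1+|S^a_j|$ for the maximal module partition) is essentially identical to the paper's argument, including the same small bookkeeping point about adjoining the singletons. The lower bound, however, is organized genuinely differently. The paper takes the smallest module $K$ in the competing decomposition containing \emph{all} of $S=S^{\alpha}_k$ and shows that any part of its maximal module partition meeting $S$ in a proper nonempty subset must have source equal to its unique element $v_2$ of $S$ and hence lie inside $D(v_2)$; this forces $|S|$ parts to cover $S$ plus one more for $K\setminus D(S)$. You instead prove an ``absorption'' lemma---any module whose source dominates $a^*$ and which meets $S$ must contain all of $S\cup\{a^*\}$---and then apply it to the \emph{minimal} module $N$ containing two elements of $S$ and to the part of $N$'s partition containing $a^*$. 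The two arguments use the same two ingredients (lifting strong connectivity of $S$ in $G_{a^*}$ to paths in $G[D(a^*)]$ landing on child nodes, and the existence of $s$-paths avoiding non-dominators), but your lemma is the contrapositive packaging: it is reusable (you invoke it twice, once for $N$ and once for the part containing $a^*$) and makes explicit why SCCs of $G_a$ cannot be split by any module entered ``from above,'' whereas the paper's version more directly localizes where the offending parts can live. Two spots in your sketch are terse but standard and completable: the existence of a path from $c_{p_{k-1}}$ to $u_k$ that stays inside $D(c_{p_{k-1}})$ (take the suffix of an $s$-to-$u_k$ path after its last visit to $c_{p_{k-1}}$ and argue it cannot leave $D(c_{p_{k-1}})$), and the final step $a^*\in M$ (prepend an $a^*$-avoiding $s$-to-$n$ path to the suffix of an $s$-to-$c_i$ path after its last visit to $n$). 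You should also note the degenerate case $W=2$ (all $|S^a_j|=1$), where the bound is trivial since every nesting decomposition has width at least $2$; the paper dispatches this in one sentence.
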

\begin{proof}
    We begin by showing that $AC$ indeed defines a nesting decomposition, and its width is equal to $1 + \max_{a,i} |S^a_i|$. To do this we must show that every set is a module, and that no two sets overlap.

    Fix $a$, and let $S_1^a,S_2^a,\dots,S_{k_a}^a$ be the associated sequence of strongly connected components of $G_a$. 
    The sets $\{a\}\cup\bigcup_{i=1}^{j} D(S^a_i)$ for $j\in \{1,\dots,k_a\}$ form a nested increasing sequence, with $\{a\}\cup\bigcup_{i=1}^{j} D(S_i^a) \subset \{a\}\cup\bigcup_{i=1}^{k} D(S_i^a)$ for $j<k_a$. The singleton $\{a\}$ is a module, and all nodes in $D(S_i^a)$ are dominated by $a$. Because the strongly connected components $S_i^a$ are in topological order with respect to $G_a$, any arcs into $D(S_j^a)$ must originate in $\{a\}\cup\bigcup_{i=1}^{j-1} D(S_i^a)$. Hence, by induction, each $\{a\}\cup\bigcup_{i=1}^{j} D(S_i^a)$ is a module. This gives us a collection of non-overlapping sets for every $a$. 

    Finally suppose that a node $v$ is in two sets $M$ and $H$ in $AC$, which have distinct source nodes $a$ and $b$. Then $v$ is dominated by both $a$ and $b$, and so by Lemma~\ref{tree order lemma} either $a\geq_s b$ or $b\geq_s a$. Without loss of generality, we assume the former. In particular, $b$ is a descendant of some child $S_i^a$ of $a$. But then all descendants of $b$ are also in $D(S_i^a)$, and so we have that $H\subseteq M$. Hence $AC$ is a nesting decomposition of $G$.

    Pick a module $M := \{a\}\cup\bigcup_{i=1}^{j} D(S^a_i)$ in $AC$. We can precisely identify the size of its maximal module partition. First, the set $\{a\}\cup\bigcup_{i=1}^{j-1} D(S^a_i)$ is a module in $AC$, and a subset of $M$, and by construction it is maximal. Hence, the remaining modules in the maximal module partition must be subsets of $D(S_j^a)$ with sources other than $a$. The nodes $v$ in $S_j^a$ are the children of $a$ in the dominator tree. For each such $v$, the set $D(v) = \{v\}\cup\bigcup_{i=1}^{k_v} D(S^v_i)$ is a module in $AC$. Because $v$ is a child of $a$ in the dominator tree, $v$ cannot be dominated by any other elements of $S_j^a$. In particular this means that the module $D(v)$ must be maximal inside $M$. Hence the maximal module partition of $M$ in $AC$ is exactly 
    \begin{equation} \label{module set}
        \left \{ \{a\}\cup\bigcup_{i=1}^{j-1} D(S^a_i),D(v_1),\dots,D(v_m)
    \right \}
    \end{equation}
    where $v_1,v_2,\dots,v_m$ are the nodes in $S^a_j$. Hence, the size of the maximal module partition of $M$ is equal to $1 + m = 1 + |S^a_j|$. The width of $AC$ is the size of the largest maximal module partition of any module in $AC$, which is $1 + \max_{a,i} |S^a_i|$. We denote this quantity by $w_{AC}$. 
    
    We will now show that no other nesting decomposition can have strictly smaller width than $w_{AC}$. For contradiction, suppose that a nesting decomposition $T$ exists with width $w < w_{AC}$. Note that every decomposition has width at least two, so this implies that $w_{AC}$ is at least three.

    Since $w_{AC} = 1 + \max_{a,i} |S^a_i|$, we can choose some $\alpha$ and $k$ such that $|S^\alpha_k| = w_{AC} - 1$. Let $K$ be the smallest module in $T$ that contains all of $S^\alpha_k$. By assumption, the maximal module partition of $K$ in $T$ has size at most $w < w_{AC}$. Therefore, there must exist a module $M$ in the maximal module partition of $K$ that contains a subset of $S^\alpha_k$ but not the whole set. In particular, we have two nodes $v_1 \in S^\alpha_k \setminus M$ and $v_2 \in S^\alpha_k \cap M$. Now, because $S^\alpha_k$ is strongly connected (and the definition of $G_\alpha$), there is a path $\pth{v_1}{v_2}$ in $D(S^\alpha_k)$. For $M$ to be a module, its source $\beta$ must be contained in $\pth{v_1}{v_2}$ (since we go from $v_1 \notin M$ to $v_2 \in M$). Moreover, by definition of the dominator tree, there is also a path $\pth{s}{v_2}$ that does not contain any of the nodes of $\pth{v_1}{v_2}$ except $v_2$, and since $\beta$ must also be contained in $\pth{s}{v_2}$,\footnote{This follows since $s \notin K$ but $v_2 \in K$. To see the former, note that if $s \in K$, then $\beta  = s$. But $\beta$ is contained in the path $\pth{v_1}{v_2}$ in $D(S^\alpha_k)$, so $s \in D(S^\alpha_k)$, which is a contradiction since $|S^\alpha_k| \geq 2$.} we get that $\beta = v_2$. Therefore, $M \subseteq D(v_2)$. Since $M$ was arbitrary and the maximal module partition of $K$ must cover $S^\alpha_k$, we get that the number of modules in this partition that overlaps $S^\alpha_k$ must be at least $|S^\alpha_k|$ and these modules are all contained in $D(S^\alpha_k)$. However, $K$ must also have nodes outside of $D(S^\alpha_k)$.\footnote{To see it, assume $K$ is contained in $D(S^\alpha_k)$. Since $K$ contains $S^\alpha_k$, the source $\beta$ of $K$ must belong to $S^\alpha_k$. However, since $|S^\alpha_k| \geq 2$, we may pick another node $v \neq \beta$ in $S^\alpha_k$. By definition of the dominator tree, there exists a path $\pth{s}{v}$ that does not contain $\beta$, a contradiction since $v \in K$.} Therefore, there exist at least one other module $M'$ in the maximal module partition of $K$ that accounts for the nodes outside of $D(S^\alpha_k)$. We conclude that the maximal module partition of $K$ must have size at least $|S^\alpha_k|+1 = w_{AC}$, so $w \geq w_{AC}$, a contradiction.
\end{proof}

\section{Computing the A-C tree} \label{sec: nesting alg}

We proved in Theorem~\ref{thm: correctness} that the A-C tree defines a nesting decomposition of minimal width. However, for this to be useful for solving SSSP, we must also be able to compute the A-C tree efficiently. In this section we combine some classical linear-time algorithms (dominator trees, depth-first-search and strongly connected components) to achieve a linear-time algorithm for computing the A-C tree (Algorithm~\ref{alg:ac decomposition}).

The first (and most technical) step of A-C tree computation is the construction of the dominator tree. Fortunately, several decades of algorithmic work has established linear-time algorithms for this problem \cite{alstrup1999dominators}. The second step, recalling Definition~\ref{def: AC tree}, is to group the children $C(a)$ of a node $a$ in the dominator tree into components, defined by reachability in the induced subgraph $D(a)$. The main challenge of this step is to efficiently construct the induced graphs $G_a$ whose nodes are $C(a)$ and where there is an arc $\arc{u}{v}$ in $G_a$ if there is a path $\pth{u}{v}$ in $G[D(a)]$. In Algorithm~\ref{alg:compress}, we show how to construct the graphs $G_a$ for every $a$ in linear time using a variant of depth-first search. Once each graph $G_a$ has been constructed, we can run a linear-time strongly connected components algorithm (such as \cite{tarjan1972depth}) on each $G_a$ from $a$ to generate the components of $G_a$ in topological order, giving the children of $a$ in the A-C tree.

\begin{algorithm}
\SetAlgoLined
\SetKwInOut{Input}{Input}
\SetKwInOut{Output}{Output}
\SetKwProg{Init}{A-C\_decomposition}{($G,source$)}{}
\SetKwProg{nestdecomp}{A-C\_recursive}{}{}
\SetKw{nestdecompfunc}{A-C\_recursive}
\SetKw{acyclicdecomp}{A\_decomposition}
\SetKw{merge}{merge}
\SetKwData{opened}{is\_opened}
\SetKwData{closed}{is\_closed}
\SetKwData{pred}{predecessors}
\SetKwData{count}{count}
\SetKwData{true}{True}
\SetKwData{false}{False}
\SetKwData{none}{None}
\SetKwData{initcount}{initial\_count}
\SetKwData{indegree}{indegree}
\SetKwData{merged}{merged}
\SetKwData{find}{find}
\SetKwData{flag}{acyclic\_root}
\SetKwData{group}{group\_parent}
\SetKw{comp}{compress}
\Input{Graph $G = (V,E,s)$}
\Output{A-C tree $T_{AC}$}
Create an empty dictionary $T_{AC}$\;
Construct the dominator tree $T$ of $G$\;
$graphs \gets \texttt{DominanceGraphs}{(G,s,T)}$\; 
\For{each node $a$}{
    $K_1,K_2,\dots,K_m \gets SCCs(graphs[a])$\tcp*{Topologically sort $G_a$} 
    Set $T_{AC}[a] \gets K_1,K_2,\dots,K_m$\;
}
\Return $T_{AC}$\;

 \caption{\texttt{A-C\_tree}}
 \label{alg:ac decomposition}
\end{algorithm}

\begin{algorithm}
    \SetAlgoLined
\SetKwInOut{Input}{Input}
\SetKwInOut{Output}{Output}
\SetKwData{found}{found}
\SetKwData{parent}{parent}
\SetKwData{current}{current}
\SetKwData{graph}{graph}
\SetKwProg{compress}{\texttt{RecursiveDominanceGraphs}}{}{}
\SetKw{comp}{\texttt{RecursiveDominanceGraphs}}
\Input{Graph $G$, with source $s$ and dominator tree $T$}
\Output{$G_p$ for each node $p$ given by $graphs$}
Run \comp{}($s$)\;
$graphs[p] \gets p.\graph$ for each node $p$\;
\Return $graphs$\;
\compress{$(v)$}{
    $v.\found  \gets True$\;

    \For{each arc $(v,w)$ in $T$}{
        \If{$w.\found = False$}{
            $v.\current \gets w$\;
            \comp{$(w)$}
        }
    }
    \For{each arc $(v,w)$ in $G\setminus T$}{
        $p \gets w.\parent$\;
        \If{$p.\current \neq w$ \label{lineref: check}}{
        add the arc $(p.\current,w)$ to $p.\graph$\; \label{lineref: adding arc}
        }
    }
}

\caption{\texttt{DominanceGraphs}}
\label{alg:compress}
\end{algorithm}

\begin{thm}
Given a graph $G=(V,E,s)$, Algorithm~\ref{alg:ac decomposition} constructs its A-C tree in time~$O(n+m)$.
\end{thm}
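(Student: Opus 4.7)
The plan is to analyse the three phases of Algorithm~\ref{alg:ac decomposition} separately: (i) building the dominator tree $T$ of $G$; (ii) constructing each quotient graph $G_a$ via \texttt{DominanceGraphs}; and (iii) extracting the strongly connected components of every $G_a$ in topological order using Tarjan's algorithm. Phase (i) runs in $O(n+e)$ by invoking a standard linear-time dominator-tree routine such as \cite{alstrup1999dominators}. Phase (iii) applies Tarjan's algorithm to each $G_a$, which naturally returns the components in reverse topological order, matching the requirement in Definition~\ref{def: AC tree}. All of the novel content therefore concerns phase (ii).

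The first ingredient for phase (ii) is a structural observation: for any arc $(v,w)\in G$, the only graph $G_a$ in which $(v,w)$ can contribute an edge is $G_p$ with $p = w.\texttt{parent}$ in $T$. Indeed, if $a$ is an ancestor of $w$ in $T$ strictly above $p$, then $a$ dominates $w$; concatenating any path $s \leadsto v$ with the arc $(v,w)$ yields a path $s \leadsto w$, which must contain $a$, so $a$ also dominates $v$. Consequently $v$ and $w$ lie in the same child-subtree of $a$ and $(v,w)$ induces no edge in $G_a$. Within $G_p$, the arc $(v,w)$ should therefore produce the edge $(u,w)$, where $u$ is the unique child of $p$ with $v\in D(u)$, except when $u=w$ (that is, $v\in D(w)$), in which case it must be discarded because $G_p$ carries no self-loops by Definition~\ref{def: AC tree}.

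I would then verify that \texttt{DominanceGraphs} maintains exactly the invariant needed to identify $u$ on the fly: throughout the recursive call on a child $u$ of $p$, the field $p.\texttt{current}$ equals $u$ and is not overwritten until that call returns. This holds because $p.\texttt{current}$ is updated only inside the first \texttt{for} loop of the call at $p$, where the updates are interleaved with the recursive calls that each explore one subtree at a time. Hence whenever a non-tree arc $(v,w)$ is examined at a descendant $v$ of a child $u$ of $p$, the value $p.\texttt{current}$ is precisely the correct $u$, and the guard on line~\ref{lineref: check} discards exactly the self-loop case. Combined with the structural observation, this shows \texttt{DominanceGraphs} outputs the correct $G_p$ for every $p$, so the SCCs of $G_p$ taken in topological order match the children of $p$ in the A-C tree as required by Definition~\ref{def: AC tree}.

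For the running time, phase (ii) is a single depth-first traversal that examines each node and each arc a constant number of times, so it runs in $O(n+e)$. For phase (iii) I would use $\sum_a |V(G_a)| = \sum_a |C(a)| = n-1$ (each non-source node has a unique parent in $T$) and $\sum_a |E(G_a)| \leq e$ (each non-tree arc of $G$ contributes to at most one $G_a$ by the structural observation above), so Tarjan's algorithm summed over all $G_a$ runs in $O(n+e)$. Adding the three phases yields the claimed $O(n+e)$ bound. The main obstacle I foresee is the rigorous proof of the $p.\texttt{current}$ invariant, together with a careful check that $G$-arcs coinciding with dominator-tree arcs (and hence excluded from $G\setminus T$) are treated consistently so that they never spuriously appear in any $G_a$; everything else is essentially DFS bookkeeping.
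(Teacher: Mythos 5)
Your proposal is correct and follows essentially the same route as the paper: linear-time dominator tree construction, a DFS along the dominator tree in which the \texttt{current} pointer identifies the child subtree containing the tail of each non-tree arc (with the module property of each $D(y)$ forcing the head to be the child $y$ itself), and Tarjan's SCC algorithm applied to each $G_a$ with the same $\sum_a|C(a)|=n-1$ and $\sum_a|E(G_a)|\le e$ accounting. One small caveat: your inference that ``$a$ dominates both $v$ and $w$, hence they lie in the same child-subtree of $a$'' is not valid as stated (common dominance does not imply a common child subtree); the correct justification is again the module property---since $w$ is not a child of any strict ancestor $a$ of $p$, an arc into $w$ from a different child subtree of $a$ would violate that the subtree containing $w$ is a module---but this is a local fix, not a change of approach.
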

\begin{proof}
Algorithm~\ref{alg:ac decomposition} begins by constructing the dominator tree $T$ of $G$, which can be done in linear time \cite{alstrup1999dominators}. Then we run the subroutine \texttt{DominanceGraphs$(G,s,T)$} (Algorithm~\ref{alg:compress}).

(\emph{Claim:} Algorithm~\ref{alg:compress} computes $G_a$ for every $a$.) Algorithm~\ref{alg:compress} performs a depth-first search (DFS) on the graph $G$ starting at $s$. However, the key property is that the specific DFS tree it follows is the dominator tree $T$. It is immediate that this algorithm is linear time, as it examines each node and arc exactly~once.

Fix a node $a$. We would like to show that $a.graph$ equals $G_a$ after running  Algorithm~\ref{alg:compress}. To do this, we must show that for every distinct $x,y\in C(a)$ where there is an arc from $D(x)$ to $D(y)$ in $G[D(a)]$, we add the arc $\arc{x}{y}$ to $a.graph$, and conversely, every added arc $\arc{x}{y}$ to $a.graph$ must be an on this form. First, observe that at the time we call $\texttt{RecursiveDominanceGraphs}(a)$, no descendants of $a$ have yet been visited. By running a DFS from $a$ in the dominator tree, $\texttt{RecursiveDominanceGraphs}(a)$ visits every descendant of $a$ exactly once. For each $v\in D(a)$, we examine each outgoing arc $\arc{v}{w}$ exactly once.

To this end, suppose first there is an arc $\arc{v}{w}$ in $G[D(a)]$ where $v\in D(x)$ and $w\in D(y)$ for distinct $x,y\in C(a)$. Then, while running $\texttt{RecursiveDominanceGraphs}(x)$, we reach the node $v$ and examine the arc $\arc{v}{w}$. Because $T$ is a dominator tree, $D(y)$ is a module with source $y$. Thus, any arc into $D(y)$ from outside $D(y)$ must go to $y$. Since $v \notin D(y)$, the arc $\arc{v}{w}$ is such an arc, and therefore $w = y$. Then $w.parent = y.parent = a$. Because we are currently running $\texttt{RecursiveDominanceGraphs}(x)$, $a.current = x$. Hence on line~\ref{lineref: adding arc} we add the arc $\arc{x}{y}$ to $a.graph$. 

Conversely, assume we add an arc $\arc{x}{y}$ to $a.graph$ on line \ref{lineref: adding arc}, while examining some arc $\arc{v}{y}$. We want to show that $\arc{x}{y}$ is then an arc of $G_a$. Note that $y.parent = w.parent =a$, so $y \in C(a)$. Moreover, $x = a.current$ and thus, $x \in C(a)$. Since $x$ and $y$ are distinct by line~\ref{lineref: check} (meaning $v$ is not dominated by $y$), it remains to show that $v \in D(x)$. To see this, observe that $v \in D(a)$, since the arc $\arc{v}{w}$ goes to $w \in D(a)$ and $w \neq a$. Because $v$ is the current node in the DFS and $x$ is on the current DFS path from $a$, we get that $v \in D(x)$. Thus, $\arc{x}{y}$ is an arc of $G_a$.

We conclude that for each $a$, Algorithm~\ref{alg:compress} results in $a.graph$ being equal to $G_a$.


Finally, we iterate through each node $a$ and run Tarjan's strongly connected components algorithm \cite{tarjan1972depth} on $graph[a]$, which we know is equal to $G_a$. This algorithm runs in linear time, and returns the strongly connected components $K_1,K_2,\dots,K_m$ of $G_a$ in topological order. Note that because of the tree structure, each node is topologically sorted exactly once, ensuring the overall procedure is still linear time. The resultant map $T_{AC}$ is the A-C tree, by Definition~\ref{def: AC tree}.
\end{proof}

\section{Shortest paths from the A-C tree} \label{sec: rec Dijkstra}

From the A-C tree, we implicitly obtain a minimum-width nesting decomposition. In this section, we show how this can be exploited to improve state-of-the-art shortest-path~algorithms. 

First (Section~\ref{sec: recursive_dijkstra}), we show how, using the A-C tree, we can modify just a few lines of Dijkstra's algorithm and obtain an algorithm (Algorithm~\ref{alg:recursive_dijkstra_main}) with improved time complexity $O(m + n\log \nestwidth(G))$. To the best of our knowledge, this is currently the best asymptotic improvement of Dijkstra's algorithm for general graphs. In particular, for classes of graphs with bounded nesting width, Algorithm~\ref{alg:recursive_dijkstra_main} runs in linear time.

Second (Section~\ref{sec: rec SSSP}), we show how to apply the A-C tree to any \emph{any} black-box SSSP algorithm (such as the recent work of \cite{duan2025breaking}). More precisely, we present an algorithm called Recursive SSSP (Algorithm~\ref{alg:rec sssp2}) that takes as input a black-box SSSP algorithm $\mathcal{SP}$ and applies it recursively using the A-C tree. Given a time complexity bound $f_{\mathcal{SP}}(n,m)$ for $\mathcal{SP}$, the time complexity of Recursive SSSP is $O(m \alpha(n) + \sum_{K_i} f_{\mathcal{SP}}(n_i,m_i))$ (Theorem~\ref{thm: recursive_SSSP}), where the sum is taken over all components $K_i$ of the A-C tree, $n_i$ ($m_i$) is the number of internal nodes (arcs) in component $K_i$, and $\alpha(n)$ is the extremely slow-growing inverse Ackermann function. This results in improved time complexity under mild conditions. Indeed, whenever $f_{\mathcal{SP}}(n,m) \in \Omega(m \alpha(n))$ 
this is an asymptotic improvement. To illustrate this, we apply Recursive SSSP on the recent sparse graph algorithm of \cite{duan2025breaking}, which runs in time $O(m\log^{2/3}(n))$. The result is an algorithm with improved time complexity $O(m \alpha(n)+m\log^{2/3}(\nestwidth(G)))$ (Corollary \ref{corol: sparse_graph_algorithm}). In particular, for sparse graphs $m=O(n)$, our algorithm runs in $O(n \alpha(n)+n\log^{2/3}(\nestwidth(G)))$. To the best of our knowledge, this is best beyond-worst-case complexity bound for SSSP on sparse graphs.

\subsection{Recursive Dijkstra} \label{sec: recursive_dijkstra}

We will begin with Algorithm~\ref{alg:recursive_dijkstra_main}, which we call Recursive Dijkstra. The idea is simply that, upon visiting a node $a$, we first recursively search its children $K^a_1,K^a_2,\dots,K^a_m$ in the A-C tree, where again we search recursively when we reach new nodes. Each component $K^a_i$ contains its own priority queue, which is updated individually.

\begin{thm}\label{thm:recursive_dijkstra}
    Recursive Dijkstra solves SSSP in $O(m+n\log\nestwidth(G))$ time.
\end{thm}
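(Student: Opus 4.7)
The plan is to establish two things: (i) correctness of Recursive Dijkstra's, and (ii) the claimed time bound of $O(e + n \log \nestwidth(G))$. Correctness will follow from induction on the A-C tree via Bellman's principle and the module structure guaranteed by Theorem~\ref{thm: correctness}; complexity is a counting argument controlled by the width of the decomposition.

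For correctness, I would argue by induction on the height of a node $a$ in the A-C tree that the recursive call at $a$ correctly computes the shortest-path distance from $s$ to every node in $D(a)$, given that $d(a)$ has been correctly set before the call. By Theorem~\ref{thm: correctness}, each prefix set $M^a_j := \{a\} \cup \bigcup_{i=1}^{j} D(S^a_i)$ is a module of $G$ with source $a$, so every path from $s$ into $D(a)$ must pass through $a$. Hence computing shortest paths to nodes in $D(a)$ reduces to an SSSP subproblem on $G[\{a\} \cup D(a)]$ from $a$, offset by $d(a)$. Inside this subgraph, the components $S^a_1,\dots,S^a_{k_a}$ are in topological order in $G_a$, so every cross-component arc between descendants of distinct children of $a$ goes from an earlier component to a later one. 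Combined with the standard Dijkstra invariant for the priority queue associated with the current component $S^a_i$ and the inductive hypothesis applied to each recursively visited child $v \in S^a_i$, this closes the argument.

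For complexity, every node $v \in V$ belongs, as a child of its dominator parent in $T$, to exactly one strongly connected component $S^p_i$ in the A-C tree. All of $v$'s priority-queue operations---a single insert, a single extract-min, and up to one decrease-key per incoming arc---take place in the queue for this component. By Theorem~\ref{thm: correctness}, the width of the A-C decomposition is $1 + \max_{a,i}|S^a_i| = \nestwidth(G)$, so each queue has size at most $\nestwidth(G)$. Using Fibonacci heaps, insert and extract-min cost $O(\log \nestwidth(G))$ and decrease-key is $O(1)$ amortized, giving $O(n \log \nestwidth(G) + e)$ for all queue operations. The remaining bookkeeping---traversing the A-C tree, dispatching recursive calls, and scanning arcs---is $O(n + e)$. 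Summing yields the desired $O(e + n \log \nestwidth(G))$.

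The main obstacle I anticipate is formalising the inductive hypothesis: one must carefully state what the recursive call at $a$ guarantees and when each distance becomes \emph{final}, since arcs from $v \in D(S^a_i)$ may relax distances of nodes in later components $D(S^a_j)$ with $j > i$ as well as cousins within the same strongly connected component. The module property does the heavy lifting here: no arc from outside $M^a_j$ enters $D(S^a_i) \setminus \{a\}$, so every relaxation is directed at a priority queue that still owns the target node, and Bellman's principle guarantees that processing proceeds correctly in non-decreasing distance order.
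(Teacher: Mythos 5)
Your proposal is correct and follows essentially the same route as the paper: correctness by induction over the A-C tree, using the module/topological-order structure from Theorem~\ref{thm: correctness} to justify that each relaxation targets a component whose queue has not yet been exhausted, and complexity by charging each node's insert/extract-min to its own component's queue of size at most $\nestwidth(G)$ and each arc to an $O(1)$ decrease-key. The paper's proof is organised slightly differently (it first treats the flat case of a trivial dominator tree, then lifts to the recursive case by noting that an arc $\arc{x}{y}$ of $G_a$ is realised by an arc $\arc{u}{y}$ with $u\in D(x)$, whose relaxation happens during the recursive call at $x$), but the substance, including the subtlety you flag about when distances become final, is the same.
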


\begin{algorithm}[t]
    \SetAlgoLined
    \SetKwProg{main}{RecursiveDijkstra}{}{}
    \SetKw{mainfunc}{RecursiveDijkstra}{}{}
    \SetKwProg{Asearch}{Asearch}{}{}
    \SetKwProg{Csearch}{ComponentDijkstra}{}{}
    \SetKwProg{Updist}{Updist}{}{}
    \SetKw{recdijkstrafunc}{recursive\_step}
    \SetKw{Asearchfunc}{Asearch}
    \SetKw{Csearchfunc}{ComponentDijkstra}
    \SetKw{Updistfunc}{Updist}{}{}
    \SetKwProg{Init}{Main}{}{}
    \SetKwData{arcorder}{arc\_order}
    \SetKwData{indegree}{indegree}
    \SetKwData{count}{count}
    \SetKwData{dist}{dist}
    \SetKwData{parent}{parent}
    \SetKwData{weight}{weight}
    \SetKwData{queue}{queue}
    \SetKwData{component}{component}
    \SetKwInOut{Input}{Input}
    \SetKwInOut{Output}{Output}
    \Input{Graph $G = (V,E,s)$}
    \Output{Shortest distances and path tree given by \dist and \parent}
        Create two dictionaries $\dist$ and $\parent$\;
        For each $v$, set $\parent[v] \gets None$ and $\dist[v] = \infty$\;
        Set $\dist[s]\gets 0$\;
        {\color{Blue}
        Compute $T_{AC} \gets \texttt{A-C\_tree}(G)$\;
        Assign a priority queue to each component $K^a_i$ with the nodes in $K^a_i$\;}
        Run \mainfunc{$(s)$}\;
        \Return \dist, \parent\;
    {\color{Blue}
    \main{$(v)$}{
        
        $K_1,K_2,\dots,K_m \gets T_{AC}[v]$\;
        \For{each $i$ from 1 to $m$}{
            \Csearchfunc{$(K_i)$}\;
        }
    }}
    \Csearch{$(\kappa)$}{
        $Q\gets \kappa.\queue$\;
        \While{$Q$ is not empty}{
            $(dist,v) \gets Q.extract\_min()$\;
            $\dist[v] \gets dist$\;
            {\color{Blue} \mainfunc{$(v)$}\;} 
            \For{each successor $w$ of $v$}{
                    $L \gets \dist[v] + \weight(v,w)$ \tcp*{Potentially new distance to $w$}
                    \If{$L < \dist[w]$}{
                        $\dist[w] \gets L$ \tcp*{Update distance to $w$}
                        $\parent[w] \gets v$ \tcp*{Save new shortest path to $w$}
                        {\color{Blue} $K \gets$ component $K^a_i$ containing $w$ in $T_{AC}$\;
                        $K.\queue.decrease\_key(w,L)$ \tcp*{Update priority queue}} \label{lineref: update queue}
                    }
            }
        }
    }
    \caption{\texttt{Recursive Dijkstra}} 
    \label{alg:recursive_dijkstra_main}
\end{algorithm}

\begin{proof}

\textbf{Correctness.}
Overall, Recursive Dijkstra runs very similarly to the standard setup of Dijkstra's algorithm, with two key differences: (1) for each node $a$ in the A-C tree it searches each connected component $K^a_i$ separately in topological order, and (2) it applies this process recursively using the A-C tree structure. In Algorithm~\ref{alg:recursive_dijkstra_main}, we highlight the differences from Dijkstra's algorithm in blue. We will explain these below.

First, we show that solving SSSP separately over strongly connected components works. Suppose the dominator tree is trivial, with $s$ dominating all nodes but no other dominance relations. Then this algorithm reduces to solving SSSP by running Dijkstra's separately on each connected component $K_1,K_2,\dots,K_n$ in topological order. We will first show that this produces the same output as running Dijkstra's algorithm on the entire graph collectively. The shortest paths to nodes in $K_1$ does not depend on any other $K_i$, because no paths exist from $K_i$ to $K_1$. Hence Dijkstra's algorithm on $K_1$ alone successfully finds the shortest paths to all nodes in $K_1$. Then we run it on $K_2$, having previously found all shortest paths to $K_1$. But this operates the same as running Dijkstra's algorithm on $K_1\cup K_2$, where we already have solved shortest paths to $K_1$. Again, the larger $K_i$ do not affect the shortest paths, so again this correctly identifies shortest paths to $K_1\cup K_2$. The remainder of the argument proceeds identically, by induction.

Now we argue that this works in the recursive setting as well. Suppose that we are searching from a node $a$ in $T_{AC}$. Assume by induction that until this point, running Recursive Dijkstra on any node $x$ correctly identifies the shortest paths from $x$ to all nodes in $D(x)$. We will show that this holds as well for $a$. Let $K^a_1,K^a_2,\dots,K^a_m$ be the components of $a$, which again we will search in topological order. The only difference from the base case is that an arc $\arc{x}{y}$ in $G_a$ does not necessarily correspond to an arc $\arc{x}{y}$ in $G$---instead, it might correspond to an arc $\arc{u}{w}$ where $u\in D(x)$ and $w\in D(y)$. Suppose this is the case, for some $x$ and $y$ with $y\in K^a_i$. First, recall that because $D(y)$ is a module, $w = y$. Then, by the inductive hypothesis, we have already found the shortest path from $x$ to $u$, because $u\in D(x)$. Hence we can update the distance to $y$ in $K^a_i.queue$ by using $dist[u] + weight(u,y)$ (line~\ref{lineref: update queue}). This operation takes place while searching recursively from $u$. That is, once we have run Recursive Dijkstra on $x$, we have correctly updated the distances to all successors $y$ in $G_a$. Hence searching the components in topological order in $G_a$ works as before, allowing us to correctly compute shortest paths from $a$ to all nodes in $D(a)$.

\textbf{Time complexity.} Having proved the correctness, it remains to show the time complexity of Algorithm \ref{alg:recursive_dijkstra_main}. To this end, note first that all subroutines in Algorithm \ref{alg:recursive_dijkstra_main} are done in linear time, except running Dijkstra's algorithm (given by \textbf{ComponentDijkstra}) on each strongly connected component. We compute this time complexity. Totally analogous to Section \ref{sec: motivation}, assuming $m$ strongly connected components $K_1,\dots,K_m$ in the A-C tree, the time complexity to run Dijkstra's algorithm on all these components is $O\big ((\sum_{i=1}^m m_i) + (\sum_{i=1}^m n_i)\log (\max_i n_i) \big )$, where $n_i$ and $m_i$ is the number of nodes and edges in $K_i$. We note that $\sum_{i=1}^m n_i = n$ and $\sum_{i=1}^m m_i \leq m$, where the latter holds since some arcs may not be in the components. Moreover, $\max_i n_i = \max_i |K_i| = w_{AC}-1=\nestwidth(G)-1 \leq \nestwidth(G)$ by the proof of Theorem~\ref{thm: correctness}. We conclude that the time complexity to run Dijkstra's algorithm on all components is
\begin{equation*}
O \left ( \left (\sum_{i=1}^m m_i \right ) + \left (\sum_{i=1}^m n_i \right )\log \left (\max_i n_i \right ) \right ) \leq O \left (m+n\log\nestwidth(G) \right ).
\end{equation*}
This is also the time complexity of Algorithm \ref{alg:recursive_dijkstra_main}, as all other subroutines are executed in linear time.
\end{proof}
Theorem \ref{thm:recursive_dijkstra} shows that Recursive Dijkstra is asymptotically faster than Dijkstra's algorithm (since $\nestwidth(G) \leq n$). In particular, for graphs of bounded nesting width, Recursive Dijkstra runs in linear time:
\begin{corol}\label{corol:bounded_nesting_width_linear_SSSP}
If a class of graphs has bounded nesting width, then 
Recursive Dijkstra runs in linear time $O(n+m)$ on this class.
\end{corol}

\begin{algorithm}[ht!]
    \SetAlgoLined
    \SetKwProg{main}{TreeRecursion}{}{}
    \SetKw{mainfunc}{TreeRecursion}{}{}
    \SetKwProg{Asearch}{Asearch}{}{}
    \SetKwProg{Csearch}{ComponentSearch}{}{}
    \SetKwProg{Updist}{Updist}{}{}
    \SetKw{recdijkstrafunc}{recursive\_step}
    \SetKw{Asearchfunc}{Asearch}
    \SetKw{Csearchfunc}{ComponentSearch}
    \SetKw{Updistfunc}{Updist}{}{}
    \SetKwProg{Init}{Main}{}{}
    \SetKwData{arcorder}{arc\_order}
    \SetKwData{indegree}{indegree}
    \SetKwData{count}{count}
    \SetKwData{dist}{dist}
    \SetKwData{parent}{parent}
    \SetKwData{weight}{weight}
    \SetKwData{queue}{queue}
    \SetKwData{component}{component}
    \SetKwData{pred}{root}
    \SetKwData{inarcs}{in-arcs}
    \SetKwInOut{Input}{Input}
    \SetKwInOut{Output}{Output}
    \Input{SSSP algorithm $\mathcal{SP}$ and graph $G = (V,E,s)$}
    \Output{Shortest distances and path tree given by \dist and \parent}
        Create dictionaries $\dist$, $\parent$, $origin$\;
        Initialise $\dist[s] = 0$, and $\dist[v] = \infty$ and $\parent[v] \gets None$ for all $v$\;
        Initialise $v.find = v$ and $v.find.weight = 0$ for each node $v$ in $G$\; \label{alg:rec sssp2: init_find}
        Compute $T_{AC} \gets \texttt{A-C\_tree}(G)$\;
        \mainfunc{$(s)$}\;
        Compute \dist using the shortest path tree \parent\;
        \Return \dist, \parent\;
    \main{$(a)$}{
        $K_1,K_2,\dots,K_m \gets T_{AC}[a]$\;
        \For{each $i$ from 1 to $m$ \label{iterate through Ki}}{
            Create empty graph $G_{K_i}$, with node set $K_i \cup \{a\}$\;
            \For{each $u \in K_i$}{
                \mainfunc($u$)\;
            }
            \For{each $u \in K_i$}{
                \For{each arc $(x,u)$ in $G$ with weight $W$ \label{predecessors}}{
                    $v, d \gets find(x)$\; \label{alg:rec sssp2:call_find}
                    \eIf{$(v,u)$ in $G_{K_i}$}{
                        \If{$d + W < weight_{G_{K_i}}(v,u)$}{
                            Update weight of $(v,u)$ in $G_{K_i}$ to $d + W$\;
                            $origin[v,u] \gets x$\;
                        }
                    }{
                        Add arc $(v,u)$ to $G_{K_i}$ with weight $d + W$\;
                        $origin[v,u] \gets x$\;
                    }
                }
            }
            $dist\_a\_K_i,\ parent\_a\_K_i \gets \mathcal{SP}(G_{K_i},a)$\tcp*{SP:s to $K_i$ from $a$} \label{alg: rec sssp2: SP}
            \For{each $u\in K_i$}{ \label{alg:rec sssp2: last_for_loop}
                $(u.find, u.find.weight) \gets (a,dist\_a\_K_i[u])$ \label{update u}\; \label{alg:rec sssp2: pointer_update}
                $\parent[u] \gets
                origin[parent\_a\_K_i[u],u]$ \label{rec sssp2: recover origin}
            }
        }
    }
    \caption{\texttt{Recursive SSSP}}
    \label{alg:rec sssp2}
\end{algorithm}

\subsection{Recursive SSSP} \label{sec: rec SSSP}

In Algorithm~\ref{alg:recursive_dijkstra_main}, we integrated the A-C tree into the recursive structure of Dijkstra's algorithm. Here, we show that the A-C tree can be used to speed up any SSSP algorithm in a black-box way, using a slightly more complicated method shown in Algorithm~\ref{alg:rec sssp2}. It operates on the same principles as Recursive Dijkstra, in that we will recursively find shortest paths in each subtree of the A-C tree, except that we can no longer assume that when we recursively find the shortest paths from a node $u$, the shortest path from $s$ to $u$ is already known. To account for this, we use (1) an augmented merge-find data structure (called at line \ref{alg:rec sssp2:call_find} using $find$) to hold the roots of recursive shortest-path-subtrees, see Appendix~\ref{sec: merge-find} for details, and (2) build an auxiliary graph $G_{K_i}$ for each component $K_i$ in the A-C tree on which we will then separately run the SSSP algorithm (line \ref{alg: rec sssp2: SP}). Because we run SSSP on a modified graph, we must convert the shortest paths found in $G_{K_i}$ back into paths in $G$. We do this by storing the `true' tail of an arc $\arc{v}{u}$ in $G_{K_i}$ in a dictionary $origin[v,u]$, from which we recover the true shortest-path-parents on line~\ref{rec sssp2: recover origin}. The formal result is given in the following theorem. We defer the proof to Appendix~\ref{sec: proofs sec 6}, though the principles are the same as for Recursive Dijkstra.

\begin{thm}\label{thm: recursive_SSSP}
    Given an SSSP algorithm $\mathcal{SP}$ that runs in time $f_{\mathcal{SP}}(n,m)$ on a graph with $n$ nodes and $m$ arcs, Algorithm~\ref{alg:rec sssp2} solves SSSP in time $O(m \alpha(n) + \sum_{K_i} f_{\mathcal{SP}}(n_i,m_i))$, where the sum is taken over all components $K_i$ of the A-C tree, $n_i$ ($m_i$) is the number of internal nodes (respectively arcs\footnote{Specifically, an arc is `internal' in $K_i$ with respect to $T[a]$ if it corresponds to an arc between two nodes of $K_i$ in $G_a$.}) in component $K_i$, and $\alpha(n)$ is the inverse Ackermann function.
\end{thm}
\begin{proof}
    See Appendix~\ref{sec: proofs sec 6}.
\end{proof}


\begin{corol}\label{corol: sparse_graph_algorithm}
Applying Algorithm~\ref{alg:rec sssp2} to the SSSP algorithm in \cite{duan2025breaking} (with time complexity $O(m\log^{2/3}(n)$), results in an SSSP algorithm with improved time complexity $O(m \alpha(n)+m\log^{2/3}(\nestwidth(G)))$.
\end{corol}
\begin{proof}
    See Appendix~\ref{sec: proofs sec 6}.
\end{proof}

\section{Example} \label{sec: example}

Let us return to Figure~\ref{fig: intro example} from the introduction. Running algorithm \texttt{A-C\_tree} (Algorithm \ref{alg:ac decomposition}) gives us the A-C tree of $G$ shown in Figure~\ref{fig:AC_tree}. By Theorem~\ref{thm: correctness}, this A-C tree defines a minimal-width nesting decomposition of $G$ with width 3. The graph $G$ is strongly connected, but its A-C tree is quite non-trivial with a small nesting width. Intuitively, $G$ has an `almost acyclic' structure.

The algorithm also identifies the nesting bottleneck in this structure, which is the strongly connected set $\{e,d\}$. 
Indeed, when computing shortest paths using Recursive Dijkstra (Algorithm \ref{alg:recursive_dijkstra_main}), the maximum queue length used in the algorithm is equal to $\max_i |K_i| = |\{e,d\}|  =2$, where maximum is taken over all strongly connected components $K_i$ in the A-C tree. This is a significant improvement compared to a na\"ive execution of Dijkstra's algorithm, with a single large queue of length $9$ containing all nodes.


\section{Conclusion} \label{sec: conclusions}

\subsection{Summary}


In this paper, we introduced a graph decomposition called the A-C tree to obtain beyond-worst-case time complexity for any SSSP algorithm. The A-C tree breaks up a single-source graph $G$ into a recursively nested sequence of strongly connected components in topological order. We showed that the A-C tree is optimal in the sense that it achieves the nesting width $\nestwidth(G)$, the minimal width of any nesting decomposition of $G$. We also presented a linear-time algorithm for constructing the A-C tree. Using this linear-time algorithm, we can transform an SSSP algorithm by first computing the A-C tree, and then run the SSSP algorithm on the A-C tree, for reduced time complexity. Indeed, we showcased this transformation with state-of-the-art Dijkstra's algorithm and the recent sparse graph algorithm from \cite{duan2025breaking}, with improved time complexities $O(m+n\log(\nestwidth(G)))$ and $O(m\alpha(n)+m\log^{2/3}(\nestwidth(G)))$, respectively. In particular, SSSP for graphs with bounded nesting width can be solved in linear~time, an important step towards a universally optimal SSSP algorithm.

\subsection{Future Work}

Our results demonstrate that even on this classical and fundamental problem, new graph ideas can still lead to new insights. Among other things, we highlight how, despite the important progress on the complexity of DO, the arguably more natural SSSP problem is open in many cases. The fact that a combination of well-known linear-time graph algorithms can lead to non-trivial advances on this problem suggest that there is room for significant progress in future work. Direct follow-up questions are as follows.

First, classifying the classes of graphs with bounded nesting width is important. In a sense, these graphs are `somewhat acyclic', generalising DAGs, and as a result, SSSP can be solved in linear time. 
As the example in this paper illustrate, 
these classes of graphs are surprisingly rich, and are therefore worth classifying. Second, it would be interesting to understand the frequency that graphs with low nesting width occur in (sparse) random graphs. 
Third, our approach is entirely structural and does not depend on the arc weights. This resembles \emph{universal optimality} in the sense of \cite{haeupler2024universal}. Our results could be a first step towards building a universally optimal algorithm for SSSP, similar to the results of \cite{haeupler2024universal} for DO. Fourth, there are several related problems to SSSP, such as graphs with stochastic transitions between nodes, that could potentially apply our method for improved performance. Fifth, our independence from weights could also be relevant in application domains where, for example, the weights can change over time but the graph topology does not. One motivating example is road networks, where traffic conditions change over time.


\section*{Acknowledgements}

The authors would like to thank Mihalis Yannakakis and Christos Papadimitriou for valuable feedback on the paper.

\bibliographystyle{plain}
\bibliography{references,refs_additional}

\appendix

\section{Proofs from Section~\ref{sec:prelims}} \label{sec: proof sec 2}

\begin{lem}[Lemma~\ref{lem: intersection}]
    If $M$ and $H$ are overlapping modules (they intersect but neither contains the other), then $M\cup H$ and $M\cap H$ are both modules.
\end{lem}
\begin{proof}
    Let $M$ and $H$ be modules with sources $m$ and $h$ respectively. Let $x\in M\cap H$. We know $m\geq_s x$ and $h\geq_s x$ so either (1) $m\geq_s h$ or (2) $h\geq_s m$ by Lemma~\ref{tree order lemma}. Assume w.l.o.g that (1) holds. Then any path $\pth{s}{x}$ contains first $m$, then $h$, then $x$. Because the subpath $\pth{h}{x}$ does not contain $m$, but $x\in M$, we must have $h\in M\cap H$. \emph{Claim}: if $m\in M\cap H$ and $h\in M\cap H$ then $m=h$. To prove the claim, suppose that $h\in M\cap H$ and $m\in M\cap H$. Since there are paths $\pth{s}{m}$ and $\pth{s}{h}$, there must be an arc $\arc{u}{v}$ with $v\in M\cup H$ and $u\not\in M\cup H$.\footnote{Unless $s \in M \cup H$. However, if $s \in M \cup H$, then $m=s$ or $h=s$ and since both $m$ and $h$ are in $M \cap H$, $s \in M \cap H$. Thus, $s$ must be the source of both $M$ and $H$ and the claim follows.} We know that $v = m$ or $v=h$, so $v \in M\cap H$. But then because $M$ and $H$ are modules, we must have $v=m=h$, completing the claim. If follows from this claim that either $m=h$, or $m\in M\setminus H$ and $h\in M \cap H$. 
    Now let $\arc{u}{v}$ be such that $u\not\in M\cap H$ and $v \in M\cap H$. If $u\not\in M$, then $v=m$ as $M$ is a module, and so $m\in M\cap H$ implies $m=h$. If instead we have $u\not\in H$, this implies that $v=h$ as $H$ is a module. Hence $M\cap H$ is a module with source $h$. Similarly, let $\arc{u}{v}$ be such that $u\not\in M\cup H$ and $v \in M\cup H$. If $v\in M$, then $v=m$ as $M$ is a module. If $v\in H$, then $v=h$ as $H$ is a module. However, $h\in M\cap H$, so in this case $v\in M$ which implies that $v = m =h$. Hence $M\cup H$ is a module with source $m$.
\end{proof}

\begin{prop}[Proposition \ref{prop: generalise_dags}]
For any graph,\footnote{For $G$ with more than one node. When $G$ has just one node, then $\nestwidth(G)=1$.} $2 \leq \nestwidth(G) \leq n$. In particular, for a DAG $G$, the lower bound $\nestwidth(G)=2$ is attained. In this sense, $\nestwidth(G)$ can be seen as a measure of how acyclic $G$ is.
\end{prop}

\begin{proof}
    As shown in Section \ref{sec: nesting_width}, by looking at the equivalent tree form of a nesting decomposition, we have $\nestwidth(G) \leq n$. Similarly, the equivalent tree form of a nesting decomposition of $G$ must have a node with at least two children (given that $G$ has more than one node). Thus, $\nestwidth(G) \geq 2$. To show that $\nestwidth(G) = 2$ for a DAG $G$, apply Theorem \ref{thm: correctness} and note that there cannot be any strongly connected components other than singletons. Thus, by the proof of Theorem \ref{thm: correctness}, $\nestwidth(G) = 1 + \max_{a,i} |S^a_i| = 2$. This concludes the proof.
\end{proof}

\section{Proofs from Section~\ref{sec: rec Dijkstra}} \label{sec: proofs sec 6}

\begin{thm}[Theorem~\ref{thm: recursive_SSSP}]
    Given an SSSP algorithm $\mathcal{SP}$ that runs in time $f_{\mathcal{SP}}(n,m)$ on a graph with $n$ nodes and $m$ arcs, Algorithm~\ref{alg:rec sssp2} solves SSSP in time $O(m \alpha(n) + \sum_{K_i} f_{\mathcal{SP}}(n_i,m_i))$, where the sum is taken over all components $K_i$ of the A-C tree, $n_i$ ($m_i$) is the number of internal nodes (respectively arcs) in component $K_i$, and $\alpha(n)$ is the inverse Ackermann~function.
\end{thm}
\begin{proof}
    \textbf{Time complexity:}
    The algorithm begins with some initialisation, all of which can be performed in linear time because the A-C tree can be computed in linear time. We then call $TreeRecursion$ on the source $s$.

    Now consider a call to $TreeRecursion(a)$. We iterate through each component $K_i$ on line \ref{iterate through Ki}. Consider the iteration for $K_i$. After creating the graph $G_{K_i}$, we iterate through each $u\in K_i$ and do the following: (1) recurse on $u$; (2) look at each predecessor of $u$, call $find$, and update an arc in $G_{K_i}$; (3) call $\mathcal{SP}$ on $G_{K_i}$. Finally we again iterate through the output of $\mathcal{SP}$ for each $u\in K_i$.
    
    Ignoring for a moment the recursive call $TreeRecursion(u)$, observe that all other operations inside this loop work only on nodes in $K_i$. Similarly, the recursive call $TreeRecursion(u)$ operates only on nodes which are below $u$ in the A-C tree; none of these are in $K_i$. Because each node in the graph is a member of exactly one component $K_i$, the time complexity is simply the sum of the time complexities of these operations over each component $K_i$. 

    Let $n_i$ be the number of nodes in $K_i$, $m_i$ the number of arcs whose head is in $K_i$ (the tail may or may not be in $K_i$). Iterating through $K_i$ takes $O(n_i)$ operations. Iterating through the predecessor arcs and updating arcs in $G_{K_i}$ takes $O(m_i\alpha(n))$ operations, because we call $find$ once for each arc with time complexity given by Lemma~\ref{lem:find:complexity}. Finally, running $\mathcal{SP}$ takes $O(f_{\mathcal{SP}}(n_i,m_i))$ steps. Summing this over every $K_i$ in the A-C tree gives us $O(\sum_{K_i} (n_i + m_i \alpha(n) + f_{\mathcal{SP}}(n_i,m_i))) = O(n + m\alpha(n) + \sum_{K_i}f_{\mathcal{SP}}(n_i,m_i))$, as required (note that $n\in O(m)$ because the graph is connected).

    \textbf{Correctness:} For brevity, abbreviate `shortest path' as SP.
    In order to produce the true SP tree, we need that upon completion of $TreeRecursion(s)$, $parent[u]$ is equal to the true SP-parent of $u$, for each node $u$. We shall prove this in a series of claims, using~induction. 

    \emph{Inductive hypothesis 1 (IH1):} Assume for induction that after $TreeRecursion(u)$ is completed, for all $w<_s u$ we have that $parent[w]$ is equal to the SP-parent of $w$. The base case holds trivially, because if $u$ has no children in the A-C tree then there is no $w<_s u$.

    Suppose we are running $TreeRecursion(a)$, and we assume that the inductive hypothesis (IH1) holds for all children of $a$ in the A-C tree. We want to show that (IH1) also holds for $a$ once the execution finishes. We shall do this using a second induction step over the components $K_i$ that are children of $a$ in the A-C tree.

    \emph{Inductive hypothesis 2 (IH2):} At the start of the $i$th iteration of line~\ref{iterate through Ki} in $TreeRecursion(a)$, we have $parent[w]$ equal to the SP-parent of $w$ for all $w\leq_s v$, for all $v\in K_j$ with $j<i$.

    In other words, we assume that the first inductive hypothesis holds for nodes which are children of the components $K_j$ which we have already handled in this loop. If we show that (IH2) holds for all components $K_i$, it will complete the inductive proof of (IH1). Again the base case is trivial, so we shall now assume (IH2) holds up to the $i$th iteration.

    To complete our setup, we must also discuss the operation of the $find$ function. Observe that at any point in the execution of Recursive SSSP (Algorithm~\ref{alg:rec sssp2}), $parent$ defines a directed forest on the nodes in $G$. Similarly, $find(x)$ (shown in Algorithm~\ref{alg:find}) works by finding the root of the tree defined by the pointers $x.find$, and $d$ is computed by summing the weights associated with each such pointer. We need to show that, at any point in the execution of the algorithm, the root of the $find$-tree is always equal to the root of the $parent$-tree, and likewise that the sum of weights on the paths in both trees from their roots are equal. 

    \emph{Inductive hypothesis 3 (IH3):} For any $x$, whenever $find(x)$ is called it returns $v,d$ such that (1) $v$ is the root of the $parent$ tree containing $x$ and (2) $d$ is the true distance from $v$ to $x$.

    For the base case, observe that initially, $parent[v]$ is None for all $v$, so $parent$ is a forest consisting of disjoint nodes. We also initialise $v.find = v$, $v.find.weight = 0$, so $find(v) = v,0$, as required. 
    
    Now we suppose that, up to some iteration $i$ within a call to $TreeRecursion(a)$, the hypothesis (IH1), (IH2) and (IH3) hold. We shall show that all of them continue to hold after completion of this iteration.

    On iteration $i$ we will construct a graph $G_{K_i}$, run $\mathcal{SP}$ on this graph and use the shortest paths in $G_{K_i}$ to extract the true shortest paths to nodes in $K_i$. We shall use the following definition: given two nodes $v$ and $u$, we call a path from $v$ to $u$ a $\downarrow v$-\emph{path} if for all nodes $w\neq u$ on this path, we have $w\leq_s v$. The idea is that a shortest path in $G$ factors into a collection of $\downarrow v$-paths, which correspond to arcs in $G_{K_i}$. Our main fact is that $G_{K_i}$ has the following properties:
    
    \emph{Claim:} (Construction of $G_{K_i}$) 
    For $v,u\in K_i$, there is an arc $(v,u)$ in $G_{K_i}$ if and only if there is a $\downarrow v$-path in $G$ from $v$ to $u$. The weight on $(v,u)$ in $G_{K_i}$ is the minimum length of all $\downarrow v$-paths to $u$. Finally, $origin[v,u] = z$ where $z$ is the predecessor of $u$ on the shortest $\downarrow v$-path from $v$ to $u$. For arcs $(a,u)$, the same properties hold corresponding to $\downarrow a$-paths in $G$ where $u$ is the only node in $K_i$.
    
        \emph{Proof of claim:} We first construct an empty graph $G_{K_i}$. We then run $TreeRecursion(u)$ on each $u\in K_i$, so by (IH1) we know that $parent[w]$ is the SP-parent of $w$ for each $w<_s u$ for all $u\in K_i$. Next, on line~\ref{predecessors}, we examine each predecessor $x$ of each $u\in K_i$. Because all paths into $K_i$ contain $a$, and the $K_i$ are topologically ordered from $a$, we have that either (1) $x\leq_s w$ for $w\in K_i$; (2) $x\leq_s w$ for $w\in K_j$, $j< i$; or (3) $x = a$.

        By (IH3), $find(x) = v,d$, where $v$ is the root of the $parent$ tree from $x$. If (2) or (3), the root of the $parent$ tree from $x$ is $a$, by (IH1) and (IH2). If (1), the root of the $parent$ tree from $x$ is $w\in K_i$, by (IH1). In both cases, observe that $x\leq_s v$, and by (IH3) $d$ is the shortest distance from $v$ to $x$. We then update the weight of the arc $(v,u)$ in $G_{K_i}$ if $d+W$ is less than its current weight. The value $d + W$ is exactly the length of the shortest path from $v$ to $x$ followed by the arc $(x,u)$. This is a $\downarrow v$-path to $u$, because $x\leq_s v$. Every $\downarrow v$-path from $v$ to $u$ must have this form, and because we check all predecessors of $u$ we must add an arc $(v,u)$ to $G_{K_i}$ if such a path exists. Further, because subpaths of shortest paths are shortest paths, the shortest $\downarrow v$-path from $v$ to $u$ must consist of a shortest path from $v$ to $x\leq_s v$ followed by an arc $(x,u)$. Hence after checking all such paths we find the minimum distance, and this becomes the weight on $(v,u)$ in $G_{K_i}$. Finally, observe that we always set $origin[v,u]$ to be the penultimate node on the shortest such path, completing the claim.
    
    To complete the proof, consider the structure of a shortest path in $G$ from $a$ to a node $u\in K_i$. We will break it into subpaths based on the presence of nodes in $K_i \cup \{a\}$. Let $S_0$ denote the prefix, beginning at $a$ (which we denote $w_0$) and containing all nodes up to (but not including) the first node $w_1$ in $K_i$ (which could be $u$). After that, the next subpath $S_1$ will consist of $w_1$, followed by all nodes up to but not including the next node $w_2$ in $K_i$. We define the next segment $S_2$ analogously, and continue in this way until our path has the form: $S_0,S_1,S_2,\dots,u$. The first sequence $S_0$ is nonempty because it contains $a$, but the remaining sequences $S_i$ could be empty.

    Each $S_j$ begins at $w_j$ and continues until the next node $w_{j+1}\in K_i$. Observe that all nodes in $S_j$ must be in $\downarrow w_j$, because they have a path to $w_{j+1}\in K_i$ that does not contain $a$. Also because subpaths of shortest paths are also shortest paths, each of these subpaths is a shortest $\downarrow w_j$-path from $w_j$ to $w_{j+1}$. By the claim above, there must be an arc $(w_j,w_{j+1})$ in $G_{K_i}$, and the weight of this arc is equal to the total length of the path $S_j$ (the sum of the weights on the arcs out of these nodes). Further, if the final node in $S_j$ is $x$, then $origin[w_j,w_{j+1}] = x$.

    Hence there is an associated path $a,w_1,w_2,\dots,u$ in $G_{K_i}$, which has the same total length. Because all paths in $G_{K_i}$ correspond to paths in $G$, this path is also a shortest path in $G_{K_i}$. Hence for all $u\in K_i$, $dist\_a\_K_i[u]$ is equal to the true distance from $a$ to $u$. Consequently, $parent\_a\_K_i[u]$ is equal to either $a$ (if the path is exactly $a,u$) or otherwise the final $w_k$. In both cases, by the claim, $origin[w_k,u] = x$, the predecessor of $u$ on the final segment of the path, which is its true SP-parent in $G$. This is true for all $u\in K_i$, completing the proof of (IH2) and hence (IH1).

    For each $u\in K_i$, we assign it a $parent[u]$ that is not itself. Hence the root of the $parent$-tree for each such $u$ must lie outside $K_i$, and so by (IH2) it must be $a$ (note that during execution of $TreeRecursion(a)$, $parent[a] = None$). At line~\ref{update u} we set $u.find$ to be $a$, so on the subsequent iteration the root of the $find$-tree remains equal to the root of the $parent$ tree for all nodes. Finally, we set $u.find.weight$ to $dist\_a\_K_i[u]$, which we established above is equal to the true distance from $a$ to $u$. Hence (IH3) also holds on the subsequent iteration, completing its proof by induction.
\end{proof}

\begin{proof}[Proof of Corollary~\ref{corol: sparse_graph_algorithm}]
    We upper-bound $O(m \alpha(n) + \sum_{K_i} f_{\mathcal{SP}}(n_i,m_i))$ as follows
    \begin{align*}
    m \alpha(n) + \sum_{K_i} f_{\mathcal{SP}}(n_i,m_i) \leq \\
    m \alpha(n) + \sum_{K_i} f_{\mathcal{SP}}(\nestwidth(G),m_i)  \in \\
    O(m \alpha(n) + \sum_{K_i} m_i\log^{2/3}(\nestwidth(G)))) \leq \\ O(m\alpha(n)+m\log^{2/3}(\nestwidth(G)))).
    \end{align*}
    The result now follows from Theorem \ref{thm: recursive_SSSP}. Note that the resulting algorithm is indeed faster since $\alpha(n)$ increases slower than $\log^{2/3}(n)$, and $\nestwidth(G) \leq n$.
\end{proof}

\section{Merge-find structure} \label{sec: merge-find}

In this section, we provide the details of the merge-find structure, used by Algorithm~\ref{alg:rec sssp2} to quickly get shortest distances in the A-C tree. 

The idea is very similar to the standard merge-find structure \cite{tarjan1984worst} using disjoint forests with path compression, except that each pointer in the tree now additionally holds a weight value.\footnote{For the merge-find structure, we use the standard terminology `pointer' instead of `arc'. This also avoids potential confusion with the arcs in the A-C tree.} Concretely, each node $u$ in $G$ has a pointer to some ancestral node $v$ higher up in the A-C tree, with weight equal to the distance from $v$ to $u$ (note the order). By updating the pointer of $u$ whenever needed, Algorithm~\ref{alg:rec sssp2} can quickly retrieve relevant shortest distances in the A-C tree. Crucially, as we will see, the additional weight feature does not increase the amortised time complexity of the merge-find structure.



We continue by describing the details of the merge-find structure and its main function $find$ in Section \ref{find:description}. Then, we prove the time complexity in Section \ref{find:complexity}.

\begin{algorithm}
    \SetAlgoLined
\SetKwInOut{Input}{Input}
\SetKwInOut{Output}{Output}
\SetKwData{found}{found}
\SetKwData{parent}{parent}
\SetKwData{current}{current}
\SetKwData{graph}{graph}
\SetKwProg{compress}{\texttt{RecursiveDominanceGraphs}}{}{}
\SetKw{comp}{\texttt{RecursiveDominanceGraphs}}
\Input{Node $u$ in a graph $G$}
\Output{Updated pointer $u.find$ with weight $u.find.weight$ for $u$.}
Create empty list $path$ and append $u$\;
$x \gets u$\;
\While{$x.find \neq x$}{ \label{alg:find:while}
    $x \gets x.find$\;
    Append $x$ to $path$\;
}
$x_{next} = x.find$\;
\For{$x$ in $path$ in reversed order}{
    $x.find \gets x_{next}.find$\;
    $x.find.weight \gets x.find.weight + x_{next}.find.weight$ \tcp*{Extra feature} \label{alg:find:update_weight}
    $x_{next} = x$\;
}
\Return $(u.find, u.find.weight)$\; \label{alg:find:return}
\caption{\texttt{find}}
\label{alg:find}
\end{algorithm}

\subsection{Description}\label{find:description}
When running Algorithm~\ref{alg:rec sssp2}, the merge-find structure keeps track of relevant shortest distances using the function $find$, summarised by Algorithm~\ref{alg:find}. More precisely, given a node $u$ of $G$, $find(u)$ returns the shortest distance to $u$ from an ancestor $v$ in the A-C tree, where $v$ is the \emph{current highest ancestor} in the A-C tree with known shortest distance to $u$. We explain this procedure in the following steps.

\textbf{Initialisation.} Initially, we only know the shortest distance from $u$ to itself. Thus, the merge-find structure is initialised with each node $u$ of $G$ having a pointer $u.find = u$ to itself with zero weight, i.e., $u.find.weight = 0$ (line \ref{alg:rec sssp2: init_find} in Algorithm~\ref{alg:rec sssp2}). The weight is zero since the shortest distance from $u$ to itself is zero.

\textbf{First update.} The pointer $u.find = u$ remains unchanged until the shortest distance from $a$ to $u$ becomes known, where $a$ is the parent of $u$ in the A-C tree. This happens at line \ref{alg: rec sssp2: SP} in Algorithm~\ref{alg:rec sssp2}, with the shortest distance from $a$ to $u$ given by $dist\_a\_K_i[u]$. Correspondingly, $u.find$ gets updated to
$(u.find, u.find.weight) \gets (a,dist\_a\_K_i[u])$ in line \ref{alg:rec sssp2: pointer_update}. The current highest ancestor in the A-C tree with known shortest distance to $u$ is now $a$. 

\begin{rmk}[Merge] \label{rmk:merge}
    In merge-find terminology, the combination of the for-loop in line \ref{alg:rec sssp2: last_for_loop} with line \ref{alg:rec sssp2: pointer_update} in Algorithm \ref{alg:rec sssp2} is just a merge-operation, where we `merge' all nodes in $K_i$ by redirecting their pointers to a common node (in this case node~$a$). 
\end{rmk}

\textbf{Later updates.} For later updates of the pointer $u.find$, the merge-find structure uses $find(u)$, given by Algorithm \ref{alg:find}. More precisely, $find(u)$ always returns (i) the current highest ancestor in the A-C tree $v$ with known shortest distance to $u$, and (ii) the distance $d$ from $v$ to $u$. This is done in two stages. In the first stage, $find(u)$ starts from $u$ and iteratively follows pointers upwards in the A-C tree until reaching a node $v$ that points to itself (the while-loop in Algorithm \ref{alg:find}). Let this path of encountered nodes be
\begin{equation}\label{eq:pointer_path}
u \rightarrow_p x_1 \rightarrow_p x_2 \rightarrow_p \dots \rightarrow_p x_{k} \rightarrow_p v 
\end{equation}
where $u \rightarrow_p x_1$ means $u.find = x_1$. Since $v$ points to itself, the distance to $v$ from the parent $a$ of $v$ is currently unknown. Hence, the node $v$ satisfies (i). In the second stage, $find(u)$ updates the pointers for \emph{all} encountered nodes on the path in Equation \eqref{eq:pointer_path}; this makes sense since $v$ satisfies condition (i) not only for $u$ but for all nodes on this path. The update is done by iteratively going over the path in reverse order using ($x_0 = u$ and $x_{k+1} = v$)
\begin{equation}\label{eq:find_iteration}
(x_i.find, \; x_i.find.weight) \gets (x_{i+1}.find, \; x_i.find.weight+x_{i+1}.find.weight)
\end{equation}
for $i=k,k-1,\dots,0$ (the for-loop in Algorithm \ref{alg:find}). Thus, at the end of $find(u)$, each node $x_i$ points at $v$ with weight $x_i.find.weight$ equal to the shortest distance to $x_i$ from $v$, where the latter is a consequence of the construction of the A-C tree.\footnote{To see this, note that $x_i$ is dominated by $x_{i+1}$, so the shortest path from $v$ to $x_i$ must visit the nodes (in order) $v, x_k, \dots,x_{i+1},x_i$. Assume by induction that $x_i.find.weight$ is the shortest distance from $x_{i+1}$ to $x_i$ (this is true in the base case when $x_i.find$ is first updated). Then, the distance from $v$ to $x_i$ is $\sum_{l=i}^{k+1} x_i.find.weight$, due to domination. This is exactly what we get by iteratively applying Equation~\eqref{eq:find_iteration}.} In other words, all pointers along the encountered path are now up-to-date. 

The reason we update all nodes is to improve speed. For example, any node $x_i$ on the path in Equation \eqref{eq:find_iteration} is such that $find(x_i)$ runs in $O(1)$ steps just after $find(u)$ has been called, since the pointer of $x_i$ is already up-to-date pointing to $v$. More generally, after running $find(u)$, any node $\tilde{u}$ with ancestor $x_i$ will skip any of the nodes between $x_i$ and $v$ when calling $find(\tilde{u})$. This is because the pointer of $x_i$ now goes directly to $v$, illustrated by Figure~\ref{fig:find_combined}. The result is a near-constant time complexity $O(\alpha(n))$ per function call, shown next.

\begin{figure}[h]
\centering
\includegraphics[width=0.99\linewidth]{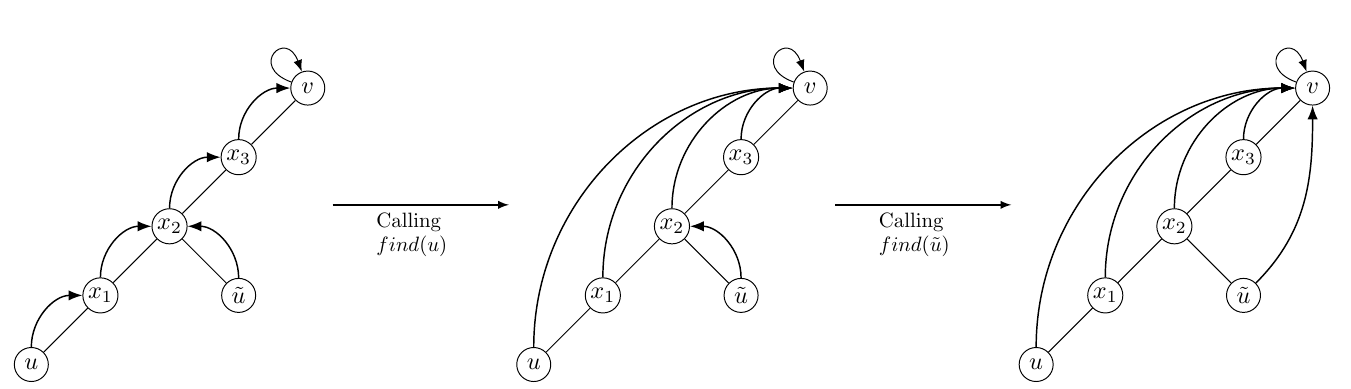}
\caption{Procedure illustration of $find$. Undirected edges corresponds to the tree hierarchy of the A-C tree (not depicting connected components) while each arrow represent a pointer from a node (not depicting weights). Left: $find$ has not been called. Middle: $find(u)$ has been called. Right: $find(\tilde{u})$ has been called. Note that $find(\tilde{u})$ can skip node $x_3$ and go directly from $x_2$ to $v$, due to the previous call of $find(u)$. Such skips result in the near-linear time complexity $O(m \alpha(n))$ of the merge-find structure.}
\label{fig:find_combined}
\end{figure}  

\subsection{Time complexity}\label{find:complexity}
\begin{lem}\label{lem:find:complexity}
The amortised time per function call of $find(x)$ in Algorithm~\ref{alg:rec sssp2} is $O(\alpha(n))$.
\end{lem}
\begin{proof}
    The only additional structure compared to a standard merge-find structure is the weight feature. Concretely, $find(x)$ (Algorithm \ref{alg:find}) is identical to one for a standard merge-find structure except lines \ref{alg:find:update_weight} and \ref{alg:find:return} that account for the weights. These modifications do not add any time overhead of $find(x)$ compared to standard merge-find, i.e., we still have $O(\alpha(n))$ in amortised time per function call \cite{tarjan1984worst}.
\end{proof}

\section{Bounded nesting width and sparse graphs are distinct} \label{sec: nesting width vs sparsity}
In this section, we prove that bounded nesting width graphs and sparse graphs form \emph{distinct} classes, formalised by the following proposition. Thus, any universally optimal SSSP algorithm needs to be optimal for \emph{both} sparse graphs and bounded nesting width graphs. This indicates the importance of bounded nesting width graphs as a separate crucial graph class for SSSP~algorithms. 

\begin{prop}\label{prop: nesting_width_and_sparsity}
Consider the bounded nesting width graph class $C_{BNW} = \{G=(V,E,s): \nestwidth(G) \leq M\}$ and the sparse graph class $C_{S} = \{G=(V,E,s): m \leq L\cdot n\}$ with constants $M\geq 2$ and $L>1$. Then $C_{S} \nsubseteq C_{BNW}$ and $C_{BNW} \nsubseteq C_{S}$.
\end{prop}

\begin{proof}
    We first show that $C_{BNW} \nsubseteq C_{S}$. Consider the DAG $G = (V,E,s)$ consisting of nodes $V= \{v_1,v_2,\dots,v_n\}$ indexed in topological order, with arcs $v_i \rightarrow v_j$ for all $j>i$. We then have
\begin{equation*}         
m = \sum_{i=0}^{n-1} i = \frac{n (n-1)}{2} \in \Omega (n^2) 
\end{equation*}
    In particular, given any $L$, we have that $m> L \cdot n$ for sufficiently high $n$. Thus, $G \notin C_{S}$ for sufficiently high $n$. On the other hand, $\nestwidth(G) = 2$ since $G$ is a DAG (Proposition \ref{prop: generalise_dags}), so $G \in C_{BNW}$ (for any $n$). We conclude that $C_{BNW} \nsubseteq C_{S}$.

    Next, we show that $C_{S} \nsubseteq C_{BNW}$. To this end, construct a graph $G =(V,E,s)$ as follows. Let $k>0$ be an integer (to be determined) and consider the complete graph $C_k$ of $k$ nodes where every node has an arc to any other node. This is a subgraph of $G$. For the remaining $n-k$ nodes $\{v_1,v_2,\dots,v_{n-k}\}$ of $G$, arrange them in a (directed) line such that $v_i \rightarrow  v_{i+1}$ for $i=1,\dots,n-k-1$, and let $v_{n-k}$ have arcs to all nodes in $C_k$. It is easy to see that the subgraph $C_k$ cannot be further decomposed by the A-C tree and so, by the proof of Theorem \ref{thm: correctness}, the nesting width is $\nestwidth(G) = 1 + \max_{a,i} |S^a_i| = 1 + |C_k| = 1+k$. Furthermore, the number of arcs in $G$ are
    
    \begin{equation*}
        m = (n-k-1)+k+k^2 = n+k^2-1
    \end{equation*}
    Thus, we want to pick $k=k(n)$ such that (i) $\nestwidth(G) = 1+k>M$ and (ii) $m = n+k^2-1 \leq L \cdot n$ for sufficiently high $n$, since then $G \notin C_{BNW}$ and $G \in C_{S}$. To this end, consider
    \begin{equation}
    k =  \mathrm{int} \Big \{ \sqrt{(L-1)\cdot n-1}  \Big \},
    \end{equation}
    where $\mathrm{int}(x)$ is the integer part of $x \in \mathbb{R}$. By construction, $m = n+k^2-1 \leq L \cdot n$ (for sufficiently high $n$), so fulfills (ii), and since $k$ increases with $n$, we also fulfill (i). Thus, $G \notin C_{BNW}$ and $G \in C_{S}$, and so~$C_{S} \nsubseteq C_{BNW}$.
\end{proof}

\begin{rmk}
The proof of Proposition \ref{prop: nesting_width_and_sparsity} actually shows a somewhat stronger result. Indeed, we can consider the class of \emph{sub-squared} nesting width class given by $C_{NW} = \{G=(V,E,s): \nestwidth(G) \leq M \cdot f(n)\}$ where $f(n) \in o(\sqrt{n})$ is any function of sub-squared growth. Following the proof (changing condition (i) to $\nestwidth(G) = 1+k>M\cdot f(n)$), we still get $C_{S} \nsubseteq C_{NW}$ and $C_{NW} \nsubseteq C_{S}$. Thus, even for this larger class $C_{NW}$ of sub-squared nesting width, nesting width and sparsity still form distinct classes.
\end{rmk}

\section{Superlinear time complexity in Duan et al. \cite{duan2025breaking}} \label{sec: superlinear}
In this section, we prove that the shortest path algorithm from \cite{duan2025breaking} (Algorithm 3, called BMSSP) has superlinear time complexity. We will make use of much of their notation and refer the reader to \cite{duan2025breaking} for details.

\textbf{Graph:} Consider $G=(V,E,s)$, where $V = \{u_0,\dots,u_N\}\cup\{v_0,\dots,v_N\}$ and $s=u_0$ is the source. Here, $s$ and all the $u$:s form a line of arcs
\begin{equation*}
    u_0 \rightarrow u_1 \rightarrow u_2 \rightarrow u_3 \rightarrow \dots \rightarrow u_{N}.
\end{equation*}
We now also have an arc $u_i \rightarrow v_i$ from each $u_i$ to the corresponding $v_i$. (There are no other arcs.) Moreover, the weights are all uniform, except $weight(u_i,v_i) =T \gg N$ with some large number $T>0$. This ensures that the shortest path to $v_i$ is always much greater than the shortest path to any $u_j$. As we will see, this is important for guaranteeing superlinear time~complexity.

Finally, let $l_{top} = \lceil \log(n)/t \rceil$ be the highest level of the BMSSP algorithm, to distinguish it from the level $l$ at which the BMSSP algorithm is currently at. This notation is not found in \cite{duan2025breaking} but will be helpful for our purposes.

\textbf{Key idea.} By the structure of $G$, almost every level $l$ of the BMSSP algorithm will look at least half of the $v$-nodes (i.e., $N/2$ nodes) as it processes the $u$-nodes. Thus, informally, we should have a lower bound on the time complexity given by $\Omega(\frac{N}{2} l_{top}) = \Omega(n \log^{1/3}(n))$, that is, superlinear time~complexity. 

\textbf{Formal proof.} To provide a formal proof, we need the following notions. 

Let $\tau_l := k2^{lt}$ for level $l\geq0$. Intuitively, $\tau_l$ captures how many $u$-nodes we can process at level $l$ in the BMSSP algorithm. Moreover, let $l^*$ be the level such that $\tau_{l^*}=k2^{l^*t}\geq N$ and $\tau_{l^*-1} = k2^{( l^*-1)t}<N$. Note that such a level $l^*$ must exist since $\tau_{l_{top}} = \Theta(kn) = \Theta(kN)$. Informally, $l^*$ will serve as the level when we run out of $u$-nodes to process in the BMSSP~algorithm.

Consider $s_{last}\geq 1$ such that $(s_{last}-1)\tau_{l^*-1}+\tau_{l^*-1}< N$ and $s_{last}\tau_{l^*-1}+\tau_{l^*-1}\geq N$. Due to the definition of $l^*$, such an $s_{last}$ must exist. The parameter  $s_{last}$ will serve as the number of times we call the BMSSP algorithm at level $l^*-1$ before we run out of~$u$-nodes. 

Finally, let $\mathcal{T}$ be the recursion tree formed by all the calls of $BMSSP(l,B,S)$ with $BMSSP(l=l_{top},B=\infty,S=\{u_0\})$ being the root call of $\mathcal{T}$. Let $n_{l,s}$ be the node in $\mathcal{T}$ corresponding to run number $s$ with $BMSSP(l,B,S)$ (for any $B$ and $S$). For example, node $n_{l^*,s_{last}}$ will be the last function call of $BMSSP(l,B,S)$ at level $l=l^*-1$ before we run out of $u$-nodes. Let $\mathcal{T}^* \subseteq \mathcal{T}$ be the tree as follows. The tree $\mathcal{T}^*$ has root node $n_{l^*,1}$ with children $n_{l^*-1,i}$ for $i=1,2,\dots,s_{last}$. Moreover, each $n_{l^*-1,i}$ in $\mathcal{T}^*$ is in turn the root of a subtree identical to the one in $\mathcal{T}$. Thus, $\mathcal{T}^*$ is indeed a subgraph of $\mathcal{T}$. The intuition is that $\mathcal{T}^*$ will be the part of $\mathcal{T}$ where we have not yet processed all $u$-nodes. The tree $\mathcal{T}^*$ will therefore be our object of interest.

With these notions, we have the following useful lemma, showing how the BMSSP algorithm operates on $\mathcal{T}^*$. This is the crucial step to show superlinear time complexity. For ease of reading, we adopt the shorthand notation $s'=s-1$ for any integer $s\geq 1$.

\begin{lem}\label{lemma:superlinear2}
Let $n_{l,s}  \in \mathcal{T}^* \backslash \{ n_{l^*,1}\}$ with $l \geq 1$. Then:
\begin{enumerate}[label=(\roman*)]
\item \textbf{Input:} The input to $n_{l,s}$ satisfies $B \geq T$ and  $S = \{u_{s' \tau_l} \} \cup S_v$, where $S_v$ is a subset of $v$-nodes. Moreover, $s'\tau_l+\tau_l< N$ (intuitively, there are enough $u$-nodes left when going forwards). 
\item \textbf{Calls:} $n_{l,s}$ calls children $n_{l-1,f}$ in the order $f' =s'2^t,s'2^t+1,\dots,s'2^t+2^t-1$ (and only them).
\item \textbf{Successors:} $n_{l,s}$ looks at each of the nodes in $\{v_{s'\tau_l+i}\}_{i=0}^{\tau_l-1}$.
\item \textbf{Output:} $n_{l,s}$ returns $B' = d(u_{s'\tau_l+\tau_l})$ and $U =\{u_{s'\tau_l+i}\}_{i=0}^{\tau_l-1}$, where $U=\cup_{i=1}^{2^t} U_i$ forms a partition of $U$.
\end{enumerate}
\end{lem}

\begin{proof}
    We prove Lemma \ref{lemma:superlinear2} by induction over the tree $\mathcal{T}^*$ considering nodes $n_{l,s}  \in \mathcal{T}^* \backslash \{ n_{l^*,1}\}$ such that $l \geq 1$. Note that $\mathcal{T}^*$ is executed in DFS order by the BMSSP algorithm (starting from $n_{l^*,1}$). The induction will be based on the same DFS order. In fact, there are two relevant orders based on this DFS order: (1) the order that each node in $\mathcal{T}^*$ starts (i.e., the first time you reach the node in the DFS order); and (2) the order that each node in $\mathcal{T}^*$ finishes (i.e., the last time the node is considered in the DFS order). For brevity, we call the two orders the \emph{starting DFS order} and the \emph{finishing DFS order}, respectively. The induction assumption, given by three separate assumptions (IA1-IA3), uses these two orders, where $\tilde{\mathcal{T}}^* := \{n_{l,s}  \in \mathcal{T}^* \backslash \{ n_{l^*,1}\}: l \geq 1\}$:

    \textbf{(IA1):} Consider a node $n_{l,s} \in \tilde{\mathcal{T}}^*$. Then, for any previous node $n_{\tilde{l},\tilde{s}} \in \tilde{\mathcal{T}}^*$ in the starting DFS order, we have that $n_{\tilde{l},\tilde{s}}$ has been called, $n_{\tilde{l},\tilde{s}}$ satisfies (i).

    \textbf{(IA2):} Consider a node $n_{l,s} \in \tilde{\mathcal{T}}^*$. Then, all children of $n_{l,s}$ that have started agree with the order in (ii).

    \textbf{(IA3):} Consider a node $n_{l,s} \in \tilde{\mathcal{T}}^*$. Then, for any previous node $n_{\tilde{l},\tilde{s}} \in \tilde{\mathcal{T}}^*$ in the finishing DFS order, $n_{\tilde{l},\tilde{s}}$ satisfies (iii) and (iv).
    


    We will need the following observation:

    \textbf{Observation 1:} If (i) is satisfied for a node $n_{l,s} \in \tilde{\mathcal{T}}^*$, then the output from the subroutine FindPivots (line 4 in the BMSSP algorithm) satisfies
    \begin{align*}
        P &= \{u_{s' \tau_l} \} \cup S_v \\
        W &\subseteq \{u_{s' \tau_l+i} \}_{i=0}^k \cup \{v_{s' \tau_l+i} \}_{i=0}^{k-1} \cup S_v
    \end{align*}
    \begin{proof}
        Observation 1 follows from how the subroutine FindPivots is executed.
    \end{proof}

    We now conduct the induction proof, starting with the base case:

    \textbf{Base case.} Consider the series of recursion calls from $n_{l_{top},1}$ all the way down to $n_{1,1}$ in $\mathcal{T}$ (with $n_{l^*,1}$ being one of these nodes). Initially, for $n_{l_{top},1}$, we start with $B=\infty$ and $S = \{u_0\}$. Then we get $P=\{u_0\}$ due to an early termination of FindPivots, and therefore $B_i=B=\infty$ and $S_i = S = \{u_0\}$ are the first values we pull from $D$ in the while-loop of $n_{l_{top},1}$. These values are fed to $n_{l_{top}-1,1}$ as input. For $n_{l_{top}-1,1}$, this process repeats, with input $B_i=B=\infty$ and $S_i = S = \{u_0\}$ as input to $n_{l_{top}-2,1}$, and so on for all nodes in the series of calls down to $n_{1,1}$. Moreover, for any node $n_{l,1} \in \tilde{\mathcal{T}}^*$ we have $\tau_l <N$. To see this, note that $\tau_{l^*-1} <N$ due to the definition of $l^*$ and since $\tau_l < \tau_{l^*-1}$ for any $l \leq l^*-1$. We conclude that each node $n_{l,1} \in \tilde{\mathcal{T}}^*$ satisfies (i). This serves as the base case for (IA1).

    Next, we show the base case for (IA2) and (IA3). To this end, consider $n_{1,1}$. It will call $n_{0,1}$ with input $S_i = \{u_0\}$ and $B_i = \infty$. The node $n_{0,1}$ performs a run with Dijkstra's algorithm (using binary heap), starting from $u_0$ and reaches the nodes $U_0 = \{u_i\}_{i=0}^{k-1}$ and $B' = d(u_k)$, which are returned to $n_{1,1}$. Returning to $n_{1,1}$, we consider the successors of $U_0$, that is, we look at $\{v_i\}_{i=0}^{k-1}$ and $\{u_k\}$. We also add $\{u_k\}$ and $\{v_i\}_{i=0}^{k-1}$ to $D$ (since $B_i' = d(u_k)$ and $B = \infty$). (*) Next, we pull $S_i = \{u_{k}\}$ from $D$ (since it is the smallest element in $D$) with bound $B_i \geq T$ (since the last remaining element in $D$ is a $v$-node) and input this to $n_{0,2}$. Similarly to $n_{0,1}$, $n_{0,2}$ returns $U_0 = \{u_i\}_{i=k}^{2k-1}$ and $B' = d(u_{2k})$. Returning to $n_{1,1}$, we look again at the successors $\{v_i\}_{i=k}^{2k-1}$ and $\{u_{2k}\}$ and add $\{u_{2k}\}$ to $D$ (this is now the smallest element in $D$). The step from (*) then repeats. Indeed, by an easy induction proof, the $f$:th call of $n_{1,f}$ will return $U_0 = \{u_i\}_{i=fk}^{fk+k-1}$ and $B' = d(u_{fk})$ to $n_{1,1}$, while $n_{1,1}$ then looks at the successors $\{v_i\}_{i=fk}^{fk+k-1}$ and $\{u_{fk}\}$, and add $u_{fk}$ to $D$. This will continue until the condition $|U|<  k2^t$ in the while-loop is violated. This condition gets violated precisely after $n_{0,2^t}$ has been called with $|U| = k2^t$. At this point, we have $B_i' = d(u_{k2^t}) = d(u_{\tau_1})$ and $U =  \{u_i\}_{i=0}^{\tau_1-1} = \cup_{i=1}^{2^t} U_f$, where $U_f = \{u_i\}_{i=fk}^{fk+k-1}$. Since $B > B_i'$, we get that $n_{1,1}$ returns the bound $B' = d(u_{\tau_1})$. Moreover, by Observation 1, $\{x \in W: d(x) < B'\} \subseteq U$ so the set that $n_{1,1}$ returns is $U$. We conclude that $n_{1,1}$ satisfies (iv). Using above, it is also easy to verify that $n_{1,1}$ satisfies (ii) and  (iii). Note that this serves as the base case for (IA2) and (IA3).

    \textbf{Induction step.} We next treat the induction step. That is, consider a node $n_{l,s} \in \tilde{\mathcal{T}}^*$ and assume (IA1) to (IA3) hold. We have three cases: (1) $n_{l,s}$ starts, (2) $n_{l,s}$ calls a node, and (3) $n_{l,s}$ finishes. We consider each case separately.

    \textbf{Case 1: $n_{l,s}$ starts.} Let $n_{l+1,g}$ be the parent of $n_{l,s}$, and assume $l+1<l^*$. We want to show that $n_{l,s}$ satisfies (i) when it starts. 
    
    Assume first that $n_{l,s}$ is the first node that $n_{l+1,g}$ calls. By the induction assumption (IA1), we know that $n_{l+1,g}$ satisfies (i). Therefore,  Observation 1 yields $P = \{u_{g' \tau_{l+1}} \} \cup S_v$ for $n_{l+1,g}$. The elements in $P$ are inserted in $D$ before the while-loop. Hence, $n_{l,s}$ is called (in the while-loop) with input set $S_i \subseteq \{u_{g' \tau_{l+1}} \} \cup S_v $ and bound $B_i \geq T$ since there are only $v$-nodes left  in $D$ or $D$ becomes empty (in which $B_i =B \geq T$ by (IA1)). We have the correct inputs to $n_{l,s}$ provided that $g' \tau_{l+1} = s'\tau_l$. To see this, note that the last $u$-node in the output set $U$ from $n_{l+1,g-1}$, call it $u_j$, and the last $u$-node in the output set $U_i$ from $n_{l,s-1}$, $u_{j'}$, are the same, due to (iv) and (IA3). That is, $j = (g'-1)\tau_{l+1}+\tau_{l+1}$ and $j'=(s'-1)\tau_l+\tau_l$ are equal and therefore $g' \tau_{l+1} = s'\tau_l$. Thus we conclude that $n_{l,s}$ have the correct inputs. What remains to show is that $s' \tau_l+\tau_l<N$. Here, we again use $g' \tau_{l+1} = s'\tau_l$ to get 
    \begin{equation*}
    s' \tau_l+\tau_l = g' \tau_{l+1}+\tau_l < g' \tau_{l+1}+\tau_{l+1} < N 
    \end{equation*}
    using (IA1). Thus, $n_{l,s}$ satisfies (i).

    Assume now that $n_{l,s}$ is not the first node that $n_{l+1,g}$ calls. With an easy induction proof over the previous nodes that $n_{l+1,g}$ has called before $n_{l,s}$ (using (IA2) and (IA3)), we get that $D = \{u_{s'\tau_l}\} \cup S_v'$ just before pulling elements to $n_{l,s}$, where $S_v'$ is a (possibly empty) subset of $v$-nodes. Therefore, similar to the case above, we get that the inputs to $n_{l,s}$ are as in (i). Moreover, by (IA2), we have $s' \leq g'2^t+2^t-1$ and
    \begin{equation*}
    s' \tau_l+\tau_l \leq (g'2^t+2^t-1) \tau_l+\tau_l = g' (2^t \tau_l)+(2^t\tau_l) = g'\tau_{l+1}+\tau_{l+1} < N,
    \end{equation*}
    so $n_{l,s}$ satisfies (i).

    Finally, assume $l+1 = l^*$ so that $n_{l+1,g} = n_{l^*,1}$. Note that the inputs to $n_{l^*,1}$ are as in (i), but that $n_{l^*,1}$ does not satisfy (i) since $\tau_{l^*}\geq N$. However, this will not pose a problem. Indeed, note that $n_{l^*,1}$ will call $n_{l,f}$ with $f = 1,2,\dots,s_{last}$ (in this order) and by definition of $s_{last}$ we have
    \begin{equation*}
    f'\tau_l+\tau_l \leq s_{last}'  \tau_l+\tau_l <N.
    \end{equation*}
    In particular, since $n_{l,s}$ is one of those calls, we have $s'  \tau_l+\tau_l <N$. It remains to show that $n_{l,s}$ receives the correct input. Totally analogous to case $l+1<l^*$ (using an induction proof exploiting (IA3) and the order $n_{l^*,1}$ calls its children), we get $D = \{u_{s'\tau_l}\} \cup S_v'$ just before pulling elements to $n_{l,s}$, where $S_v'$ is a (possibly empty) subset of $v$-nodes. Similarly to case $l+1<l^*$, we conclude that the inputs to $n_{l,s}$ are as in (i). Therefore, $n_{l,s}$ satisfies~(i).

    \textbf{Case 2: $n_{l,s}$ calls a node.} In this case, we want to show that $n_{l,s}$ calls its next child $n_{l-1,f}$ in the correct order given by (ii) using assumption (IA2). 
    
    If $n_{l,s}$ with $l=1$, then we can repeat the derivation in the base case. Indeed, we only need to use that $n_{l,s}$ satisfies (i) to reuse the same derivation, except that we also need to verify that $n_{l,s}$ calls the first node correctly, i.e., $f' = s' 2^t$. Note that $f' = s' 2^t$ is true if and only if $f'\tau_l = s' 2^t \tau_l = s' \tau_{l+1}$, where the latter can be shown analogously to the ``$g' \tau_{l+1} = s'\tau_l$''-derivation in Case 1. We therefore conclude that $n_{l,s}$ calls its next child $n_{l-1,f}$ in the correct order given by (ii). 
    
    Consider now $l\geq 2$. Assume first that $n_{l-1,f}$ is the first node that $n_{l,s}$ calls. Then we want to show $f' = s' 2^t$ which again is true if and only if $f'\tau_l = s' 2^t \tau_l = s' \tau_{l+1}$, where the latter equality can be shown analogously to the ``$g' \tau_{l+1} = s'\tau_l$''-derivation in Case 1. Thus, $n_{l,s}$ calls $n_{l-1,f}$ correctly.

    Assume next that $n_{l-1,f}$ is not the first node that $n_{l,s}$ calls. To show that $n_{l,s}$ calls $n_{l-1,f}$ correctly, it suffices to show that $n_{l,s}$ will exit the while-loop after $n_{l-1,f}$ has returned if $f'=s'2^t+2^t-1$ and continue if $f'<s'2^t+2^t-1$. To this end, assume $n_{l-1,f}$ returns with $f'=s'2^t+j-1$ for $j \leq 2^t$. At this point, due to (IA2) and (IA3), we have 
    \begin{equation*}
    |U| = \sum_{i=1}^j |U_i| = j \cdot k2^{(l-1)t} \leq 2^t k2^{(l-1)t} = k2^{lt}
    \end{equation*}
    Since $D$ never becomes empty (since $n_{l,s}$ satisfies $s'\tau_l+\tau_l< N$), the only way we can exit the while-loop is if $|U| \geq k2^{lt}$. By above, this happens when $j=2^t$ (otherwise the while-loop continues if $j<2^t$). We conclude that $n_{l,s}$ calls $n_{l-1,f}$ correctly.

    \textbf{Case 3: $n_{l,s}$ finishes.} In this case, we want to show that $n_{l,s}$ satisfies (iii) and (iv). By (IA2), the output set $U_i$ in the while-loop of $n_{l,s}$ comes from $n_{l-1,f}$ with $f'=s'2^t+i-1$. Moreover, by (IA3), we have that $U_i = \{u_{f'\tau_{l-1}+j}\}_{j=0}^{\tau_{l-1}-1}$. It is easy to see that all these $U_i$:s (with $i=1,\dots,2^t$) form a partition of consecutive $u$-nodes with the first $u$-node given by
    \begin{equation*}
    u_{s'2^t\tau_{l-1}}
    \end{equation*}
    and the last $u$-node given by 
    \begin{equation*}
    u_{(s'2^t+2^t-1)\tau_{l-1}+\tau_{l-1}-1}.
    \end{equation*}
    Noting that $s'2^t\tau_{l-1} = s' \tau_l$ and $(s'2^t+2^t-1)\tau_{l-1}+\tau_{l-1}-1 = s'\tau_l+\tau_l-1$ we get that $\cup_{i=1}^{2^t} U_i = \{u_{s'\tau_l+i}\}_{i=0}^{\tau_l-1}$. When $n_{l,s}$ exits its while-loop, we have $U= \cup_{i=1}^{2^t} U_i = \{u_{s'\tau_l+i}\}_{i=0}^{\tau_l-1}$. By an easy proof of induction over the function calls $n_{l-1,f}$, we get that $D$ is equal to
    \begin{equation*}
    D = \{ u_{s'\tau_l+\tau_l}\} \cup S_v'
    \end{equation*}
    when $n_{l,s}$ exits its while-loop, where $S_v'$ is a (potentially empty) set of $v$-nodes. Thus $B_i' = d(u_{s'\tau_l+\tau_l})$ when the while-loop is exited and since $B\geq T > B_i'$, we get that $n_{l,s}$ returns the bound $B' = B_i' = d(u_{s'\tau_l+\tau_l})$. Moreover, $U$ does not change after the while-loop since $\{x \in W: d(x)<B'\} \subseteq \{u_{s'\tau_l+i}\}_{i=0}^{\tau_l-1} = U$ (due to (IA1) and Observation 1). We conclude that $n_{l,s}$ returns the set $U=\{u_{s'\tau_l+i}\}_{i=0}^{\tau_l-1}$, where $U= \cup_{i=1}^{2^t} U_i$ is a partition of $U$. Thus, $n_{l,s}$ satisfies (iv).

    To show (iii), simply note that the nodes $\{v_{s'\tau_l+i}\}_{i=0}^{\tau_l-1}$ are successors of $U= \cup_{i=1}^{2^t} U_i$. Thus, $n_{l,s}$ must have looked at each node in $\{v_{s'\tau_l+i}\}_{i=0}^{\tau_l-1}$ in the while-loop. Thus, $n_{l,s}$ satisfies (iii).

    We have completed Cases 1 to 3 in the induction step. Therefore, by induction, Lemma \ref{lemma:superlinear2} follows.
\end{proof}

Using Lemma \ref{lemma:superlinear2}, we can show the superlinear time complexity, given by the following~corollary.

\begin{corol}
Running the shortest path algorithm from \cite{duan2025breaking} (Algorithm 3) on the graph $G$ results in superlinear time complexity with lower bound $\Omega(\frac{N}{2}l_{top})=\Omega(n \log^{1/3}(n))$, originating from the number of times we look at $v$-nodes.
\end{corol}

\begin{proof}
    Consider $n_{l^*,1}$ and the calls $n_{l^*-1,f}$ with $f=1,2,\dots,s_{last}$. Using Lemma \ref{lemma:superlinear2}, is easy to see that the number of $u$-nodes that the calls $n_{l^*-1,f}$ go through must be at least the first half of the total number of $u$-nodes (due of the definitions of $s_{last}$ and $l^*$). Let these $u$-nodes be $\{u_i\}_{i=0}^{L-1}$ so that $L \geq \frac{N}{2}$. Again, by Lemma \ref{lemma:superlinear2}, these nodes $\{u_i\}_{i=0}^{L-1}$ are in a $U_i$ at each level $l=0,1,\dots,l^*-1$. Therefore, the successors $\{v_i\}_{i=0}^{L-1}$ of $\{u_i\}_{i=0}^{L-1}$ will be looked at in each level $l=1,2,\dots,l^*-1$. Hence, the total number of looks on $v$-nodes is lower bounded by
    \begin{equation*}
        \Omega(L l^*) \geq \Omega \left (\frac{N}{2} l^*\right ).
    \end{equation*}
    Next we show that $l^* \in \Theta(l_{top})$. To see this, note that $2^{l_{top}t} = \Theta(n) = \Theta(N)$. In particular, $2^{l_{top}t} \leq c N \leq ck2^{l^* t}$ for some constant $c>1$. Thus,
\begin{equation*}
l_{top}t \leq  \log(c)+\log(k)+l^* t
\end{equation*}
and so
\begin{equation*}
l_{top} \leq  \frac{\log(c)+\log(k)}{t}+l^*.
\end{equation*}
Since $\frac{\log(c)+\log(k)}{t} \rightarrow 0$ as $n \rightarrow \infty$, we get that $l_{top} \leq  C+l^*$ for some constant $C>0$ and thus $l^* \in \Omega(l_{top})$. Since $l^* \leq l_{top}$, we have $l^* \in \Theta(l_{top})$. Therefore, 
\begin{equation*}
\Omega\left (\frac{N}{2}l^* \right) = \Omega\left (\frac{N}{2}l_{top} \right) = \Omega\left ( n \log^{1/3}(n) \right)
\end{equation*}
is a lower bound on the time complexity. This completes the proof.
\end{proof}

\end{document}